\newcommand{\ignore}[1]{}
 \newtheorem{lemma}{Lemma}
 \newtheorem{theorem}{Theorem}
 \newtheorem{corollary}{Corollary}
 \newtheorem{definition}{Definition}
\definecolor{darkgreen}{rgb}{0,0.5,0}
\newcommand{\kibitz}[2]{\ifnum\Comments=1\textcolor{#1}{#2}\fi}
\newcommand{\ag}[1]  {\kibitz{magenta}      {\bf\noindent [AG: #1]} }
\newcommand{\jrw}[1]  {\kibitz{red}      {\bf\noindent [JRW: #1]} }
\newcommand{\resolved}[1] {\ifnum\ResolvedComments=1\textcolor{blue}{[#1]}\fi}
\providecommand{\E}{\mathds{E}}\renewcommand{\E}{\mathds{E}}
\providecommand{\Ind}{\mathds{1}}\renewcommand{\Ind}{\mathds{1}}
\newcommand{\pr}{\textup{Pr}}
 \renewcommand{\paragraph}{%
   \@startsection{paragraph}{4}%
   {\z@}{0.5ex \@plus 1ex \@minus .2ex}{-1em}%
   {\normalfont\normalsize\bfseries}%
 }
\newcommand{\union}{\cup}
\newcommand{\setdiff}{\backslash}
\begin{document}

\title{Incentivizing Evaluation via Limited Access to Ground Truth: \\Peer-Prediction Makes Things Worse}
\author{Xi Alice Gao \and James R. Wright \and Kevin Leyton-Brown}
\date{}
\maketitle

\begin{abstract}
In many settings, an effective way of evaluating objects of interest is to collect evaluations from dispersed individuals and to aggregate these evaluations together.  Some examples are categorizing online content and evaluating student assignments via peer grading. 
For this data science problem, one challenge is to motivate participants to conduct such evaluations carefully and to report them honestly, particularly when doing so is costly. Existing approaches, notably peer-prediction mechanisms, can incentivize truth telling in equilibrium.  However, they also give rise to equilibria in which agents do not pay the costs required to evaluate accurately, and hence fail to elicit useful information. We show that this problem is unavoidable whenever agents are able to coordinate using low-cost signals about the items being evaluated (e.g., text labels or pictures). We then consider ways of circumventing this problem by comparing agents' reports to ground truth, which is available in practice when there exist trusted evaluators---such as teaching assistants in the peer grading scenario---who can perform a limited number of unbiased (but noisy) evaluations. Of course, when such ground truth is available, a simpler approach is also possible: rewarding each agent based on agreement with ground truth with some probability, and unconditionally rewarding the agent otherwise. Surprisingly, we show that the simpler mechanism achieves stronger incentive guarantees given less access to ground truth than a large set of peer-prediction mechanisms.
\end{abstract}

\section{Introduction}

In many practical settings, an effective way of evaluating objects of interest is to collect evaluations from dispersed individuals and aggregate these evaluations together.  
For example, many millions of users rely on feedback from Rotten Tomatoes, Yelp and TripAdvisor to choose among competing movies, restaurants, and travel destinations.  Crowdsourcing platforms provide another example, enabling the collection of semantic labels of images and online content for use in training machine learning algorithms.  
This is a data science problem with two main challenges.  How should the collected data be aggregated to produce an accurate estimate?  How should incentives be designed to motivate participants to contribute high quality data?  In this paper, we focus on the incentive issues.

\resolved{
Crowdsourcing is a problem at the intersection between algorithmic game theory and data science,
in which evaluations are elicited from individuals and then aggregated.  Both information aggregation and incentive issues play a key role in this setting; in this paper we focus on incentive issues.
}


We are particularly motivated by the peer grading problem, which we will use as a running example.  Students benefit from open-ended assignments such as essays or proofs. However, such assignments are used relatively sparingly, particularly in large classes, because they require considerable time and effort to grade properly.  An efficient and scalable alternative is having students grade each other (and, in the process, learn from each other's work).  Many peer grading systems have been proposed and evaluated in the education literature~\citep{hamer2005method,cho2007scaffolded,pare2008peering,shah2013case,de2014crowdgrader,kulkarni2014scaling,raman2014methods,wright2015mechanical,caragiannis2015aggregating,de2015incentives}, albeit with a focus on evaluating the accuracy of grades collected under the assumption of full cooperation by students.
\resolved{Added 3 references to ordinal peer grading}

However, no experienced teacher would expect all students to behave nonstrategically when asked to invest effort in a time-consuming task. An effective peer grading system must therefore provide motivation for students to formulate evaluations carefully and to report them honestly. Many approaches have been developed to provide such motivation.  One notable category is peer-prediction methods~\citep{prelec2004bayesian,miller2005eliciting,jurca2009mechanisms,faltings2012eliciting,witkowski2012robust,witkowski2013dwelling,dasgupta2013crowdsourced,witkowski2013learning,radanovic2013robust,radanovic2014incentives,riley2014minimum,zhang2014elicitability,waggoner2014output,kamble2015truth,kong2016focal,shnayder2016informed}.  
%
In order to motivate each agent to reveal his private, informative signal, peer-prediction methods offer a reward based on how each agent's reports compare with those of his peers.  Such rewards are designed to induce truth telling in equilibrium---that is, they create a situation in which each agent has an interest in investing effort and revealing his private and informative signal truthfully, as long as he believes that all other agents will do the same. 

Even if they do offer a truthful equilibrium, peer-prediction methods also always induce other uninformative equilibria, 
the existence of which is inevitable \citep{jurca2009mechanisms,waggoner2014output}.  
Intuitively, if no other agent follows a strategy that depends on her private information, there is no reason for a given agent to deviate in a way that does so either: agents can only be rewarded for coordination, not for accuracy.  When private information is costly to obtain, uninformative equilibria are typically \emph{less} demanding for agents to play. This raises significant doubt about whether peer-prediction methods can motivate truthful reporting in practice. Experimental evaluations of peer-prediction methods have mixed results.  Some studies showed that agents reported truthfully \citep{shaw2011designing,john2012measuring,faltings2014incentives}; another study found that agents colluded on uninformative equilibria~\citep{gao2014trick}.

Recent progress on peer-prediction mechanisms has focused on making the truthful equilibrium Pareto dominant, i.e., (weakly) more rewarding to every agent than any other equilibrium~\citep{dasgupta2013crowdsourced,witkowski2013learning,kamble2015truth,radanovic2015incentives,shnayder2016informed}. 
This can be achieved by rewarding agents based on the distributions of their reports for multiple objects.  
However, we show in this paper that such arguments rely critically on the assumption that every agent has access to only one private signal per object. 
This is often untrue in practice; e.g., in peer grading, by taking a quick glance at an essay a student can observe characteristics such as length, formatting and the prevalence of grammatical errors.  These characteristics require hardly any effort to observe, can be arbitrarily uninformative about true quality, and are of no interest to the mechanism.  Yet their existence provides a means for the agents to coordinate. We build on this intuition to prove that no mechanism can guarantee that an equilibrium in which all agents truthfully report their informative signals is always Pareto dominant. Furthermore, we show that for any mechanism, the truthful equilibrium is \textit{always} Pareto dominated in some settings.

Motivated by these negative results, we move on to consider a setting in which the operator of the mechanism has access to trusted evaluators (e.g., teaching assistants) who can reliably provide noisy but informative signals of the object's true quality.  This allows for a hybrid mechanism that blends peer-prediction with comparison to trusted reports. With a fixed probability, the mechanism obtains a trusted report and rewards the agent based on the agreement between the agent's report and the trusted report \citep{Jurca:2005ue}.
Otherwise, the mechanism rewards the agent using a peer-prediction mechanism.  Such hybrid mechanisms can yield stronger incentive guarantees than other peer-prediction mechanisms, such as achieving truthful reporting of informative signals in Pareto-dominant equilibrium (see, e.g., \citep{Jurca:2005ue,dasgupta2013crowdsourced}).  Intuitively, if an agent seeks to be consistently close to a trusted report, then his best strategy is to reveal his informative signal truthfully. 

In fact, the availability of trusted reports is so powerful that it gives us the option of dispensing with peer-prediction altogether. Specifically, we can reward students based on agreement with the trusted report when the latter is available, but simply pay students a constant reward otherwise. Indeed, in \citet{wright2015mechanical} we introduced such a peer grading system and showed that it worked effectively in practice, based on a study across three years of a large class. This mechanism has even stronger incentive properties than the hybrid mechanism---because it induces a single-agent game, it can give rise to dominant-strategy truthfulness.

Our paper's main focus is on comparing these two approaches in terms of the number of trusted reports that they require. One might expect that the peer-prediction approach would have the edge, both because it relies on a weaker solution concept and because it leverages a second source of information reported by other agents. Surprisingly, we prove that this intuition is backwards.  We identify a simple sufficient condition, which, if satisfied, guarantees that the peer-insensitive mechanism offers the dominant strategy of truthful reporting of informative signals while querying trusted reports with a lower probability than is required for a peer-prediction mechanism to motivate truthful reporting in Pareto-dominant equilibrium.  We then show that all applicable peer-prediction mechanisms of which we are aware satisfy this sufficient condition.

\section{Peer-Prediction Mechanisms}

We begin by formally defining the game theoretic setting in which we will study the elicitation problem. 
%
A mechanism designer wishes to elicit information about a set $O$ of objects from $n$ risk-neutral agents. 
Each object $j$ has a latent quality $q_j \in Q$, where $Q$ is a finite set.  

\ag{New paragraph: Departure from standard assumption of one signal}
\jrw{lightly edited}
Agents have access to private information about the object of interest.
In the peer prediction literature, it is standard to assume that each agent receives information from a single, private signal.  Furthermore, this signal is assumed to be be the \emph{only} information that agent has about the object of interest.
However, we argue that, in reality, every agent can obtain multiple pieces of information with different quality by investing different amounts of efforts.
To capture this, we consider a simplified scenario by assuming that, for each object $j$, agent $i$ has access to two pieces of private information: a \emph{high-quality signal} $s^h_{ij} \in Q$ and a \emph{low-quality signal} $s^l_j$.

\ag{New paragraph: Description of high-quality signal}
The high-quality signal represents useful information about the object's quality that the mechanism designer wishes to elicit.  It is drawn from a distribution conditional on the object's actual quality $q_j$.  The joint distributions of the high-quality signals are common knowledge among the agents.  An agent $i$ can form a belief about the high-quality signal of another agent $i'$ by conditioning on his own high-quality signal.  Obtaining the high-quality signal requires a constant effort $c^E > 0$.

\ag{New paragraph: Description of low-quality signal}
The low-quality signal represents irrelevant information that the mechanism designer does not care about.   Yet it is easy to obtain and provides a way for agents to coordinate their reports.  For example, when evaluating essays, students can easily observe the number of grammatical mistakes or the apparent complexity of the language used without reading essays carefully.  Similarly, one could base a review on the decor without eating in a restaurant; evaluate the quality of a movie's trailer; etc.  For simplicity, we analyze the extreme case where the low-quality signal is uncorrelated with the object's true quality, is perfectly correlated across agents, and can be observed without effort.  Our results extend directly to a more general setting where agents can invest varying amounts of effort to obtain multiple signals with different degrees of correlation with the object's true quality.

\ag{Commented out a paragraph explaining how our model works for the peer grading setting}


Agents may strategize over both whether to incur the cost of effort to observe the high-quality signal and over what to report.
The mechanism designer's goal is to incentivize each agent to both observe the high-quality signal, and to truthfully report it.  We say that a mechanism has a \emph{truthful equilibrium} when it is an equilibrium for agents to observe the high-quality signal and truthfully report it (and, for some mechanisms, their posterior belief about other agents' high-quality signals).  

The mechanism designer's aim is to incentivize each agent $i \in \{1, \ldots, n\}$ to gather and truthfully report information about every object in $j \in O$.  Let $r_{ij}$ and $b_{ij}$ denote agent $i$'s signal and belief reports for object $j$ respectively.  A mechanism is defined by a reward function, which maps a profile of agent reports to a reward for each agent. We say that a mechanism is \emph{universal} if it can be applied without prior knowledge of the distribution from which signals are elicited, and for any number of agents greater than or equal to 3.

\begin{definition}[Universal peer-prediction mechanism]\label{def:universal}
A peer-prediction mechanism is \emph{universal} if it can be operated without knowledge of
the joint distribution of the high-quality signals $s^h_{ij}$ (i.e., it is ``detail free'' \citep{WilsonDoctrine87}) and well defined for any number of agents $n \geq 3$.
\end{definition}

We focus on universal mechanisms for two reasons.  First, in practice, it is extremely unrealistic to assume that a mechanism designer will have detailed knowledge of the joint signal distribution, so this allows us to focus on mechanisms that are more likely to be used in practice.  Second, it is relatively unrestrictive, as nearly all of the peer-prediction mechanisms in the literature satisfy universality.

Existing, universal peer-prediction mechanisms can be divided into three categories: output agreement mechanisms, multi-object mechanisms, and belief based mechanisms.

\paragraph{Output Agreement Mechanisms}

Output agreement mechanisms only collect signal reports from agents and reward an agent $i$ for evaluating object $j$ based on agents' signal reports for the object~\citep{faltings2012eliciting,witkowski2013dwelling,waggoner2014output}. 
\citet{waggoner2014output} and \citet{witkowski2013dwelling} studied the standard output agreement mechanism, where agent $i$ is only rewarded when his signal report matches that of another randomly chosen agent $j$.  Agent $i$'s reward is $z_i(r) = \Ind_{r_{ij} = r_{i'j}}$.
The~\citet{faltings2012eliciting} mechanism also rewards agents for agreement, scaled by the empirical frequency of the report agreed upon.  Agent $i$'s reward is $z_i(r) = \alpha + \beta \frac{\Ind_{r_{ij} = r_{i'j}}}{F(r_{i'j})}$, where $\alpha >0$ and $\beta >0$ are constants and $F(r_j)$ is the empirical frequency of $r_j$.

\paragraph{Multi-Object Mechanisms}

Multi-object mechanisms reward each agent based on his reports for multiple objects \citep{dasgupta2013crowdsourced,radanovic2015incentives,kamble2015truth,shnayder2016informed}.  
(The~\citet{shnayder2016informed} mechanism generalizes the~\citet{dasgupta2013crowdsourced} mechanism to the multi-signal setting.  Thus, we only refer to the~\citet{shnayder2016informed} mechanism below.)
  
The~\citet{shnayder2016informed} and~\citet{kamble2015truth} mechanisms also reward agents for agreement, as in output agreement mechanisms.  They extend output agreement mechanisms by adding additional scaling terms to the reward. These scaling terms are intended to exploit correlations between multiple tasks to make the truthful equilibrium dominate (a particular kind of) uninformative equilibria, by reducing the reward to agents who agree to an amount that is ``unsurprising'' given their reports on other objects.

The~\citet{shnayder2016informed} mechanism adds an additive scaling term to the reward for agreement.
To compute the scaling term, consider two sets of non-overlapping tasks $S_i$ and $S_{i'}$ such that agent $i$ has evaluated all objects in $S_i$ but none in $S_{i'}$ and agent $i'$ has evaluated all objects in $S_{i'}$ but none in $S_i$.  Let $F_{i}(s)$ and $F_{i'}(s)$ denote the frequency of signal $s \in Q$ in sets $S_i$ and $S_{i'}$ respectively.  Agent $i$ is rewarded according to $z_i(r) = \Ind_{r_{ij} = r_{i'j}} -  \sum_{s \in Q} F_i(s) F_{i'}(s)$.

In contrast, the~\citet{kamble2015truth} mechanism adds a multiplicative scaling term to the reward for agreement.
To compute the scaling term, choose 2 agents $k$ and $k'$ uniformly at random.  For each signal $s \in Q$, let $f^j(s) = \Ind_{r_{kj} = s} \Ind_{r_{k'j} = s}$.  Define $\hat{f}(s) = \sqrt{ \frac{1}{N} \sum_{j \in O} f^j(s) }$.  If $\hat{f}(s) \in \{0, 1\}$, then agent $i$'s reward is $0$.  Otherwise, agent $i$'s reward is $\Ind_{r_{ij} = r_{i'j}} \cdot  \frac{K }{\hat{f}(s)}$ for some constant $K > 0$.

The~\citet{radanovic2015incentives} mechanism rewards the agents for report agreement using a reward function inspired by the quadratic scoring rule.  To reward agent $i$ for evaluating object $j$, first choose another random agent $i'$ who also evaluated object $j$.  Then construct a sample $\Sigma_{i}$ of reports which contains one report for every object that is not evaluated by agent $i$.
The sample $\Sigma_{i}$ is double-mixed if it contains all possible signal realizations at least twice.
If $\Sigma_{i}$ is not double-mixed, agent $i$'s reward is $0$.  Otherwise, if $\Sigma_{i}$ is double-mixed, the mechanism chooses two objects $j'$ and $j''$ ($j' \ne j$, $j'' \ne j$ and $j' \ne j''$) such that the reports of $j'$ and $j''$ in the sample are the same as agent $i$'s report for $j$, i.e. $\Sigma_{i}(j') = \Sigma_{i}(j'') = r_{ij}$.  For each of $j'$ and $j''$, randomly select two reports $r_{i''j'}$ and $r_{i'''j''}$. Agent $i$'s is rewarded according to $z_i(r) = \frac{1}{2} + \Ind_{r_{i''j'} = r_{i'j}} - \frac{1}{2} \sum_{s \in Q} \Ind_{r_{i''j'} = s} \Ind_{r_{i'''j''} = s}.$

\paragraph{Belief Based Mechanisms}

Finally, some peer-prediction mechanisms collect both signal and belief reports from agents and reward each agent based on all agents' signal and belief reports for each object~\citep{witkowski2012robust,witkowski2013learning,radanovic2013robust,radanovic2014incentives,riley2014minimum}.  Below, let $R$ denote a proper scoring rule.

The robust Bayesian Truth Serum (BTS)~\citep{witkowski2012robust,witkowski2013learning} rewards agent $i$ for how well his belief report $b_i$ and shadowed belief report $b'_i$ predict the signal reports of another randomly chosen agent $k$.  Agent $i$'s reward is $z_i(r, b) = R(b'_i, r_k) + R(b_i, r_k)$.  Agent $i$'s shadowed belief report is calculated based on his signal report and another random agent $j$'s belief report: $b'_i = b_j + \delta$ if $r_i = 1$ and $b'_i = b_j - \delta$ if $r_i = 0$ where $\delta = \min(b_j,1 - b_j)$. 

The multi-valued robust BTS~\citep{radanovic2013robust} rewards agent $i$ if his signal report matches that of another random agent $j$ and his belief report accurately predicts agent $j$'s signal report. 
Agent $i$'s reward is $z_i(r, b) = \frac{1}{b_j(r_i)} \Ind_{r_i = r_j} + R(b_i, r_j)$.  

The divergence-based BTS~\citep{radanovic2014incentives} rewards agent $i$ if his belief report accurately predicts another random agent $j$'s signal report.  In addition, it penalizes agent $i$ if his signal report matches that of agent $j$ but his belief report is sufficiently different from that of agent $j$.
Agent $i$'s reward is $- \Ind_{r_i = r_j || D(b_i, b_j) > \theta} + R(b_i, r_j)$ where $D(||)$ is the divergence associated to the strictly proper scoring rule $R$, and $\theta$ is a parameter of the mechanism.

The~\citet{riley2014minimum} mechanism rewards agent $i$ for how well his belief report predicts other agents' signal reports.  Moreover, agent $i$'s reward is bounded above by the score for the average belief report of other agents reporting the same signal.  Formally, let $\delta_i = \min_{s \in Q} |\{r_j = s| j \ne i\}|$ be the minimum number of other agents who have reported any given signal.  If $\delta_i = 0$, agent $i$'s reward is $R(b_i, r_{-i})$.  
Otherwise, if $\delta_i \ge 1$, compute the proxy prediction $q_i(r_i)$ to be the average belief report for all other agents who made the same signal report as agent $i$.  Agent $i$'s reward is $\min \{ R(b_i, r_{-i}), R(q_i(r_i), r_{-i}) \}$.

\paragraph{Non-Universal Mechanisms}

We are aware of several additional peer-prediction mechanisms that we do not consider further in this paper because they are not universal in the sense of Definition~\ref{def:universal}.  
The~\citet{miller2005eliciting,zhang2014elicitability} and \citet{kong2016focal} mechanisms all
derive the agents' posterior beliefs based on their signal reports (hence requiring knowledge of the distribution from which signals are drawn); they all then reward the agents based on how well the derived posterior belief predicts other agents' signal reports using proper scoring rules.
The~\citet{jurca2009mechanisms} mechanism requires knowledge of the prior distribution over signals to construct rewards that either penalize or eliminate symmetric, uninformative equilibria. 
The Bayesian Truth Serum (BTS) mechanism~\citep{prelec2004bayesian} requires an infinite number of agents to guarantee the existence of the truthful equilibrium.  While we do not consider this mechanism, we note that \citet{prelec2004bayesian} pioneered the idea of eliciting both signal and belief reports from each agent.  This key idea was leveraged in much subsequent work to sustain the truthful equilibrium while not requiring knowledge of the prior distributions of the signals to operate the mechanism~\citep{witkowski2012robust,witkowski2013learning,radanovic2013robust,radanovic2014incentives,riley2014minimum}.

\paragraph{Hierarchical Mechanism~\citep{de2015incentives}}

\ag{Add citation to UC Santz Cruz's paper.  Should we revise definition of universality to make sure that this violates it?}
\jrw{No, I don't think so; I've edited to explicitly discuss the anonymity issue separately, see what you think.}
Independent to our work, \citet{de2015incentives} also proposed the idea of using peer prediction mechanisms in conjunction with limited access to trusted reports.  
In their hierarchical mechanism, students are placed into a tree structure.
Students in the top layer of the tree are incentivized through trusted reports whereas students in the layers below are incentivized via a peer prediction mechanism.  By an inductive argument, the truthful equilibrium exists and is unique, so long as the top-layer students are sufficiently incentivized.
This mechanism is detail free with respect to the distribution of signals, and is thus universal.
However, the existence of the truthful equilibrium requires every student to know which layer of the tree structure they occupy; that is, different students are treated differently ex-ante.
This is another example of work in which a widespread, seemingly innocuous assumption---in this case, anonymity; in the case of our own work, the single-signal assumption---turns out to have major implications.
In future work we intend to further explore relaxations of the single-signal assumption and anonymity, and connections between them.


\section{Impossibility of Pareto-Dominant, Truthful Elicitation}
\label{sec:impossibility}

In this section, 
we show that when agents have access to multiple signals about an object, Pareto-dominant truthful elicitation is impossible for any universal elicitation mechanism that computes agent rewards solely based on a profile of strategic agent reports (i.e., without any access to ground truth).
The intuition is that without knowledge of the distributions from which the signals are drawn, the mechanism cannot distinguish the signal that it hopes to elicit from other, irrelevant signals.  Thus, it cannot guarantee that the truthful equilibrium always yields the highest rewards to all agents. 

We focus on universal elicitation mechanisms that compute agent rewards solely based on a profile of agent reports.  Let $M$ denote such a mechanism. 
Let a \emph{signal structure} be a collection of signals $\{s_i\}_{i=1}^n$ drawn from a joint distribution $F$, where each agent $i$ observes $s_i$.
We say that a signal structure is \emph{$M$-elicitable} if there exists an equilibrium of $M$ where every agent $i$ truthfully reports $s_i$.  Let $\pi_i^F$ be agent $i$'s ex-ante expected reward in this equilibrium.
A \emph{multi-signal environment} is an environment in which the agents have access to at least two $M$-elicitable signal structures.  We refer to the signal structure that the mechanism seeks to elicit as the \emph{high-quality signal}, and all the others as \emph{low-quality signals}.  

\begin{theorem}
    For any universal elicitation mechanism, there exists a multi-signal environment in which the truthful equilibrium is not Pareto dominant. 
\begin{proof}
    Let $F,F'$ be $M$-elicitable signal structures such that $\pi^F_i \ge \pi^{F'}_i$ for all $i$, with $\pi^F_i > \pi^{F'}_i$ for some $i$.
    If no such pair of signal structures exists, then the result follows directly, since the truthful equilibrium does not Pareto dominate an equilibrium where agents report a low-quality signal.
    Otherwise, consider a multi-signal environment where the high-quality signal is distributed according to $F'$, and a low-quality signal is distributed according to $F$.
    The equilibrium in which agents reveal this low-quality signal Pareto dominates the truthful equilibrium in this environment.
\end{proof}
\end{theorem}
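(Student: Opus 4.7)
The plan is to exploit universality as the single lever: because $M$ is detail free, its rewards are computed from reported profiles alone, so whenever a signal structure $F$ is $M$-elicitable the ex-ante expected payoff $\pi_i^F$ agents obtain by truthfully reporting $F$ depends only on $F$ itself, not on whether $F$ has been designated as the ``high-quality'' signal or as a ``low-quality'' side channel. This portability lets me freely swap the roles of any two $M$-elicitable structures inside a multi-signal environment.

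I would then proceed by a dichotomy on the family of $M$-elicitable signal structures. In the first case, suppose there exist two $M$-elicitable structures $F,F'$ with $\pi_i^F \ge \pi_i^{F'}$ for all $i$ and strict inequality for some $i$. I construct a multi-signal environment in which the designer's high-quality signal is distributed according to $F'$ and a low-quality side signal is distributed according to $F$; by universality, agents reporting the low-quality signal following the truthful strategy for $F$ obtain exactly $\pi_i^F$, so this uninformative equilibrium Pareto dominates the truthful one. In the complementary case, no $M$-elicitable pair is comparable in the Pareto order, so for any multi-signal environment built from two such structures the low-quality equilibrium already fails to be Pareto dominated by the truthful equilibrium (indeed some agent strictly prefers the low-quality equilibrium), which is again sufficient for the statement.

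The main obstacle, and the step I would scrutinize most carefully, is verifying that the low-quality report profile is still an equilibrium in the enlarged multi-signal environment, where each agent observes both the high- and the low-quality signals. The point is that any profitable unilateral deviation in the enlarged environment must be a function of some combination of the two observed signals; but since $F$ is $M$-elicitable on its own, no deviation depending only on the low-quality signal is profitable, and any deviation that uses the high-quality signal changes only the reported distribution seen by the mechanism, which (by universality) yields at most the ex-ante truthful payoff $\pi_i^{F'}$ of that structure, and hence cannot beat $\pi_i^F$ in the dominating case, nor profitably destabilize incomparable equilibria. Once this stability claim is pinned down, both branches of the dichotomy close immediately and the theorem follows.
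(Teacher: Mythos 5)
Your proposal is correct and takes essentially the same route as the paper: the same dichotomy on whether two $M$-elicitable structures $F,F'$ stand in the Pareto relation, and the same construction assigning the better-paying structure to the low-quality signal. Your additional scrutiny of whether the low-quality profile remains an equilibrium in the enlarged strategy space is a point the paper's proof leaves implicit (it follows from $M$-elicitability of $F$ together with the mechanism's blindness to which signal is being reported), so no gap.
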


Now suppose that observing the high-quality signal is more costly to the agents than observing a low-quality signal.
Concretely, assume that observing the high-quality signal has an additive cost of $c_i>0$ for each agent $i$, and observing a low-quality signal has zero cost.  Call this a \emph{costly-observation multi-signal environment}.  In this realistic environment, an even stronger result holds.
\begin{theorem}
    For any universal elicitation mechanism, there exists a costly-observation multi-signal environment in which the truthful equilibrium is Pareto dominated.
    \begin{proof}
    Let $F,F'$ be $M$-elicitable signal structures such that $\pi^F_i \ge \pi^{F'}_i$ for all $i$.
    At least one such pair must exist, since every distribution has this relationship to itself.
    Fix a costly-observation multi-signal environment where the high-quality signal structure is jointly distributed according to $F'$, and a low-quality signal structure is jointly distributed according to $F$.
    Then each agent's expected utility in the truthful equilibrium is $\pi^{F'}_i - c_i < \pi^F_i$.  Hence every agent prefers the equilibrium in which agents reveal this low-quality signal, and the truthful equilibrium is Pareto dominated.
    \end{proof}
\end{theorem}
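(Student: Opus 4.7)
The plan is to strengthen Theorem~1 by exploiting the cost asymmetry between observing the high-quality and low-quality signals. Whereas Theorem~1 required a pair of distinct $M$-elicitable structures $F,F'$ with $\pi^F_i \ge \pi^{F'}_i$ and a strict inequality for at least one agent, the positive cost $c_i>0$ now lets me convert any weak inequality into a strict one, and in particular to take $F = F'$.

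First I would observe that we may set $F = F'$: every $M$-elicitable signal structure trivially satisfies $\pi^F_i \ge \pi^F_i$, so such a pair always exists so long as $M$ admits any $M$-elicitable signal structure at all (otherwise the theorem is vacuous, since there is no truthful equilibrium to Pareto-dominate). Then I would construct a costly-observation multi-signal environment in which both the high-quality signal structure and one of the low-quality signal structures are distributed according to $F$.

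Within this environment, I would compare two equilibria. In the truthful equilibrium, each agent $i$ pays $c_i$ to observe the high-quality signal and then reports it, obtaining ex-ante expected utility $\pi^F_i - c_i$. In the alternative equilibrium, each agent skips the costly observation entirely and truthfully reports the freely observed low-quality signal, obtaining ex-ante expected utility $\pi^F_i$. Since $c_i>0$ for every $i$, every agent strictly prefers the low-quality equilibrium, so the truthful equilibrium is Pareto dominated.

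The main obstacle is verifying that truthful reporting of the low-quality signal is actually an equilibrium of $M$ in the two-signal environment (not just in the single-signal environment that certified $F$ as $M$-elicitable). This will follow from universality: the reward function depends only on the profile of submitted reports, and when every agent reports the low-quality signal truthfully, the joint distribution of reports is identical to the one arising when every agent reports truthfully in the single-signal environment where $F$ was elicitable. Hence any unilateral deviation (including deviations that incur the cost $c_i$ to consult the high-quality signal) can be no more profitable than the best response in the original setting, which is truthful reporting. This transfers the equilibrium property of $F$ to low-quality reporting in the multi-signal environment and completes the argument.
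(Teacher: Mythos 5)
Your proposal is correct and takes essentially the same route as the paper: the paper's proof also hinges on the fact that the pair $F,F'$ always exists because every distribution satisfies $\pi^F_i \ge \pi^F_i$ with respect to itself, and then uses $c_i>0$ to obtain the strict inequality $\pi^{F'}_i - c_i < \pi^F_i$. Your extra verification that low-quality truthful reporting remains an equilibrium in the two-signal environment is a reasonable point to make explicit, but it is already built into the paper's definitions, since a multi-signal environment is defined to consist of $M$-elicitable signal structures, each of which by definition admits a truthful-reporting equilibrium of $M$.
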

%
The essential insight of these results is that, in the presence of multiple elicitable signals, there is no way for a universal elicitation mechanism to be sure which signal it is eliciting.  
In particular, the truthful equilibrium is only Pareto dominant if the high-quality signal \emph{happens} to be drawn from a distribution yielding higher reward than \emph{every other} signal available to the agents.  In costly-observation environments, the element of luck is even stronger. The truthful equilibrium is Pareto dominant only if the high-quality signal structure happens to yield sufficiently high reward to compensate for the cost of observing the signals.

One way for the mechanism designer to ensure that agents are reporting the high-quality signal is to stochastically compare agents' reports to reports known to be correlated with that signal.  In the next section, we introduce a class of mechanisms that takes this approach.

\section{Combining Elicitation with Limited Access to Ground Truth}

Elicitation mechanisms are designed for situations where it is infeasible for the mechanism designer to evaluate each object herself.  However, in practice, it is virtually always possible, albeit costly, to obtain \emph{trusted reports}, i.e. unbiased evaluations of a subset of the objects.  In the peer grading setting, the instructor and teaching assistants can always mark some of the assignments. Similarly, review sites could in principle hire an expert to evaluate restaurants or hotels that its users have reviewed; and so on.

In this section, we define a class of mechanisms that take advantage of this limited access to ground truth to circumvent the result from Section~\ref{sec:impossibility}.
\begin{definition}[spot-checking mechanism]
  A \emph{spot-checking mechanism} is a tuple $M=(p,y,z)$, where $p$ is the \emph{spot check probability}; $y$ is a vector of functions $y_{ij}(r_{ij}, s^t_j)$ called the \emph{spot check mechanism}; and $z$ is a vector of functions $z_{ij}(b,r)$ called the \emph{unchecked mechanism}.

  Let $\Delta(Q)$ be the set of all distributions over the elements of $Q$.
  Each agent $i$ makes a \emph{signal report} $r_{ij} \in Q$, and a \emph{belief report} $b_{ij} \in \Delta(Q)$ for each object $j \in J_i$.
  The signal report is the signal that $i$ claims to have observed, and the belief report represents $i$'s posterior belief over the signal reports of the other agents.

  Agents may strategically choose whether or not to incur the cost of observing the high-quality signal, and having chosen which signal to observe, may report any function of either signal.
  Formally, let $G_i^h = \{g : Q \to Q\}$ be the set of all full-effort pure strategies, where an agent observes the high-quality signal---incurring observation cost $c^E$---and then reports a function $g(s^h_{ij})$ of the observed value.
  Let $D_l$ be the domain of $s_{ij}^l$.
  Let $G_i^l = \{g : D^l \to Q\}$ be the set of all no-effort pure strategies, where an agent observes the low-quality signal---incurring no observation cost---and then reports a function $g^l(s^l_{ij})$ of the observed value.
  The set of pure strategies available to an agent is thus $G_i^h \union G_i^l$.
  We assume that agents apply the same strategy to every object that they evaluate; however, we allow agents to play a mixed strategy by choosing the mapping stochastically.

  With probability $p$, the mechanism will \emph{spot check} an agent $i$'s report for a given object $j$.
  In this case, the mechanism obtains a \emph{trusted report}---that is, a sample from the signal $s^t_j$.
  The agent is then rewarded according to the spot check mechanism, applied to the profile of signal reports and spot checked objects.  With probability $1-p$, the object is not spot checked, and the agent is rewarded according to the unchecked mechanism.

  Thus, given a profile of signal reports $r \in \prod_{i \in N}Q^{J_i}$ and belief reports $b \in \prod_{i \in N} \Delta(Q)^{J_i}$, 
  an agent $i$ receives a reward of $\pi_i = \sum_{j \in J_i} \pi_{ij}$, where
  \begin{equation}
      \pi_{ij} = \begin{cases}
                     y_{ij}(r_i, s^t) &\text{if agent $i$'s report on object $j$ is spot checked,}\\
                     z_{ij}(b, r)         &\text{otherwise.}
                 \end{cases}
  \end{equation}
\end{definition}

We assume that the mechanism designer has no value for the reward given to the agents.  Instead, we seek only to minimize the probability of spot-checking required to make the truthful equilibrium either unique or Pareto dominant, since access to trusted reports is assumed to be costly.\footnote{If access to trusted reports were not costly, then querying strategic agents rather than trusted reports on all the objects would be pointless.}  This models situations where agents are rewarded by grades (as in peer grading), virtual points or badges (as in online reviews), or other artificial currencies.

The low-quality signal might be arbitrarily correlated with the underlying quality.
However, we assume that the high-quality signal is more correlated, in the sense that paying the cost of observing the high-quality signal is worthwhile.
Formally,
\[
     \E \left[ y_{ij}(s^h, s^t) - c^E \right] > \E \left[ y_{ij}(s^l, s^t) \right].
\]
That is, an agent who knows that they will be spot checked would prefer to pay the cost to observe the high-quality signal rather than observing the low-quality signal for free.
As an extreme example, if the low-quality signal were perfectly correlated with the quality, then no amount of spot-checking would induce an agent to observe the high-quality signal (nor, indeed, would a mechanism designer want them to).
\jrw{Might need more wordsmithing}

In this work we compare two approaches to using limited access to ground truth for elicitation.
The first approach is to augment existing peer-prediction mechanisms with spot-checking:
\begin{definition}[spot-checking peer-prediction mechanism]
    Let $z$ be a peer-prediction mechanism.
    Then any spot-checking mechanism that uses $z$ as its unchecked mechanism is a \emph{spot-checking peer-prediction mechanism}.
\end{definition}
The second approach is to rely exclusively on ground truth access to incentivize truthful reporting:

\begin{definition}[peer-insensitive mechanism]
    A \emph{peer-insensitive mechanism} is a spot-checking mechanism in which the unchecked mechanism is a constant function.  That is, $z_{ij}(b,r) = W$ for some constant $W > 0$.
\end{definition}

\section{When Does Peer-Prediction Help?}

We compare the peer-insensitive mechanism with all universal spot-checking peer-prediction mechanisms.  In Theorem~\ref{theorem_suff_condition_pareto}, we show that, if a simple sufficient condition is satisfied, then compared to all universal spot-checking peer-prediction mechanisms, the peer-insensitive mechanism can achieve stronger incentive properties (dominant-strategy truthfulness versus Pareto dominance of truthful equilibrium) while requiring a smaller spot check probability. 

We first define the $g^l$ strategy to be an agent's best no-effort strategy when a spot check is performed.
What is special about this strategy is that, if an agent chooses to invest no effort, then this is his best strategy for any spot check probability $p \in [0,1]$.  Thus, the $g^l$ equilibrium is stable and the best equilibrium for all agents conditional on not investing effort.
\begin{definition}
Let $g^l = \arg\max_{g \in G} \E[ y(g^l(s^l), s^t)]$ be an agent's best strategy when a spot check is performed and the agent invests no effort.
Let the $g^l$ equilibrium be the equilibrium where every agent uses the $g^l$ strategy. 
\end{definition}

In Lemma~\ref{lemma_prob_ds}, we analyze the peer-insensitive mechanism and derive an expression for the minimum spot check probability $p_{\textup{ds}}$ at which the truthful strategy is a dominant strategy for the peer-insensitive mechanism.  When the spot check probability is $p_{\textup{ds}}$, any agent is indifferent between playing the $g^l$ strategy and investing effort and reporting truthfully.

\newcommand{\labelDS}{\label{eqn_ds}}
\newcommand{\lemmaProbDS}{
The minimum spot check probability $p_{\textup{ds}}$ at which the truthful strategy is dominant for the peer-insensitive mechanism satisfies the following equation.
\begin{align}
p_{\textup{ds}}\, \E [ y(s^h, s^t) ] - c^E &= p_{\textup{ds}}\, \E[y(g^l(s^l), s^t)].  \labelDS
\end{align}
}

\begin{lemma} \label{lemma_prob_ds}
\lemmaProbDS
\end{lemma}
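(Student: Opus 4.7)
The peer-insensitive mechanism pays a constant $W$ with probability $1-p$ (when no spot check occurs) and pays $y(r,s^t)$ with probability $p$ (when a spot check occurs). Since $W$ does not depend on the agent's report or effort choice, it contributes additively to every strategy's expected utility and will cancel out in any pairwise comparison. I will exploit this to reduce the dominance condition to a single linear equation in $p$.

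The plan is to compute the ex-ante expected utility of the truthful strategy and the $g^l$ strategy, show that the $g^l$ strategy is the binding alternative, and then solve for the crossover probability. Concretely, the truthful strategy yields
\[
U_T(p) \;=\; p\,\E[y(s^h,s^t)] + (1-p)\,W - c^E,
\]
while the $g^l$ strategy yields
\[
U_{g^l}(p) \;=\; p\,\E[y(g^l(s^l),s^t)] + (1-p)\,W.
\]
Any other no-effort strategy is weakly dominated by $g^l$ by the definition of $g^l$ as the best no-effort response under a spot check (and because the unchecked payoff $W$ is strategy-independent). Any other effort strategy $g \in G^h_i$ is (weakly) dominated by truthful reporting because the designer constructs $y$ so that truthful reporting of the high-quality signal maximizes $\E[y(g(s^h),s^t)]$ among all $g \in G^h_i$ (this is precisely what makes the mechanism's spot-check component incentive-compatible for the high-quality signal). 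Hence truthful reporting is dominant if and only if $U_T(p) \ge U_{g^l}(p)$.

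Setting $U_T(p) \ge U_{g^l}(p)$ and cancelling $(1-p)W$ on both sides yields
\[
p\,\E[y(s^h,s^t)] - c^E \;\ge\; p\,\E[y(g^l(s^l),s^t)].
\]
Both sides are linear in $p$. At $p=0$, the left-hand side equals $-c^E < 0$ and the right-hand side equals $0$, so the inequality fails. At $p=1$, the maintained assumption $\E[y(s^h,s^t)] - c^E > \E[y(g^l(s^l),s^t)]$ (which follows from the paper's assumption that paying $c^E$ to observe the high-quality signal is worthwhile under a spot check) makes the left-hand side strictly greater. By linearity there is a unique $p_{\textup{ds}} \in (0,1)$ at which the two sides are equal, and above which the truthful strategy is (weakly) dominant. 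This threshold is exactly the $p_{\textup{ds}}$ characterized by equation~(\ref{eqn_ds}).

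The main obstacle is the argument that $g^l$ (rather than some other no-effort or mixed strategy) is the binding constraint; this relies on the constancy of the unchecked payoff $W$, which is the distinguishing feature of the peer-insensitive mechanism, together with the definition of $g^l$. Once this reduction is made, the rest of the proof is a one-line linear calculation.
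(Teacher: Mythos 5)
Your proposal is correct and follows essentially the same route as the paper's own proof: write the expected utilities of the truthful and $g^l$ strategies, cancel the strategy-independent term $(1-p)W$, and equate the remainders to obtain Equation~(\ref{eqn_ds}). You additionally justify why $g^l$ is the binding alternative and why the crossover exists in $(0,1)$, which the paper leaves implicit; this is a welcome (and harmless) elaboration rather than a different argument.
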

\renewcommand{\labelDS}{}

\begin{proof}
Please see Appendix \ref{sec:proof_lemma_ds}.
\end{proof}

Next, we consider any spot-checking peer-prediction mechanism.  Our goal is to derive a lower bound for $p_{\textup{Pareto}}$, the minimum spot check probability at which the truthful equilibrium is Pareto dominant.

For the truthful equilibrium to be Pareto dominant, it is necessary that the truthful equilibrium Pareto dominates the $g^l$ equilibrium.  This can be achieved in two ways.  If we can increase the spot check probability until $p_{\textup{el}}$ at which the $g^l$ equilibrium is eliminated, then the truthful equilibrium trivially Pareto dominates the $g^l$ equilibrium.  Otherwise, we can increase the spot check probability until $p_{\textup{ex}}$ at which the truthful equilibrium Pareto dominates the $g^l$ equilibrium assuming that the $g^l$ equilibrium exists when $p = p_{\textup{ex}}$.  Thus, $\min(p_{\textup{el}}, p_{\textup{ex}})$ is the minimum spot check probability at which the truthful equilibrium Pareto dominates the $g^l$ equilibrium, and it is also a lower bound for $p_{\textup{Pareto}}$.

In Lemma~\ref{lemma_el_ds}, we derive an expression for $p_{\textup{el}}$ and show that it is greater than or equal to $p_{\textup{ds}}$ under certain assumptions.  Intuitively, in order to eliminate the $g^l$ equilibrium, we need to increase the spot check probability enough such that an agent is persuaded to playing his best strategy with full effort rather than playing the $g^l$ strategy.  On one hand, the agent incurs a cost deviating from the $g^l$ equilibrium when all other agents follow it.  On the other hand, the agent's best strategy with full effort gives him no greater spot check reward than the truthful strategy.  The combined effect means that it is more costly to persuade an agent to deviate from the $g^l$ equilibrium than to motivate a single agent to report truthfully.

The sufficient conditions characterized in Lemmas~\ref{lemma_el_ds} and~\ref{lemma_ex_ds} and Theorem~\ref{theorem_suff_condition_pareto} are required to hold when $c^E=0$.  Note however, that if this condition is satisfied when $c^E=0$, then the consequents of these lemmas and theorems hold in settings with all positive cost of effort $c^E \ge 0$ as well.  Moreover, we will show that these sufficient conditions are satisfied by all universal peer-prediction mechanisms that we are aware of in the literature.

\newcommand{\lemmaElDS}{For any spot-checking peer-prediction mechanism, if the $g^l$ equilibrium exists when $c^E = 0$ and $p = 0$, then $p_{\textup{el}} \ge p_{\textup{ds}}$ for all $c^E \ge 0$.}
\begin{lemma}  \label{lemma_el_ds}
\lemmaElDS
\end{lemma}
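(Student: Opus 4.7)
The plan is to argue that, starting from the $g^l$ equilibrium, the only deviations that can be profitable for any single agent at spot-check probability $p_{\textup{el}}$ are full-effort deviations, and that those require $p$ to be at least $p_{\textup{ds}}$.

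First I would fix an agent $i$, assume every other agent plays $g^l$, and split the possible unilateral deviations into no-effort deviations $g' \in G^l$ and full-effort deviations $g^* \in G^h$. For the no-effort case, $g^l$ is by construction the spot-check-optimal no-effort strategy, so $\E[y(g'(s^l), s^t)] \le \E[y(g^l(s^l), s^t)]$; in addition, the hypothesis that $g^l$ is an equilibrium when $c^E = 0$ and $p = 0$ reduces exactly to the peer-prediction inequality $\E[z_i | g^l, g^l_{-i}] \ge \E[z_i | g', g^l_{-i}]$ for every alternative $g'$. Combining these two facts shows that any $g' \in G^l$ yields a weakly lower total payoff than $g^l$ at every spot-check probability, so any binding deviation at $p_{\textup{el}}$ must lie in $G^h$.

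Next I would write the weak-deviation condition for some $g^* \in G^h$ evaluated at $p_{\textup{el}}$:
\begin{equation*}
p_{\textup{el}}\,\bigl(\E[y(g^*(s^h), s^t)] - \E[y(g^l(s^l), s^t)]\bigr) \ge c^E + (1 - p_{\textup{el}})\,\bigl(\E[z_i | g^l, g^l_{-i}] - \E[z_i | g^*, g^l_{-i}]\bigr).
\end{equation*}
I would then invoke two facts: (i) the spot-check mechanism $y$ is designed so that truthful reporting of the observed high-quality signal is optimal among $G^h$ strategies, which gives $\E[y(g^*(s^h), s^t)] \le \E[y(s^h, s^t)]$; and (ii) the hypothesis again gives $\E[z_i | g^l, g^l_{-i}] \ge \E[z_i | g^*, g^l_{-i}]$, making the parenthesised peer-prediction term on the right-hand side non-negative. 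Enlarging the left-hand side by fact (i) and dropping the non-negative term on the right by fact (ii) yields $p_{\textup{el}}\,(\E[y(s^h, s^t)] - \E[y(g^l(s^l), s^t)]) \ge c^E$, which rearranges to $p_{\textup{el}} \ge p_{\textup{ds}}$ by the characterisation in Lemma~\ref{lemma_prob_ds}.

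The main obstacle is a conceptual one about why the hypothesis only needs to hold at a single corner of parameter space. The peer-prediction reward $z$ depends on neither $c^E$ nor $p$, so the equilibrium inequality obtained at $(c^E, p) = (0, 0)$ carries over unchanged to all $c^E \ge 0$ and $p \in [0, 1]$; this is precisely what lets a one-point hypothesis deliver a bound uniform in $c^E$. A minor technicality is that the argument should also cover mixed-strategy deviations, but this follows from linearity of expected utility in the mixing distribution, so it suffices to check pure-strategy deviations as above.
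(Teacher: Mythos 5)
Your proposal is correct and follows essentially the same route as the paper's proof: both reduce the problem to the full-effort deviation (using that $g^l$ is optimal among no-effort strategies for both reward components), then chain the spot-check truthfulness inequality $\E[y(g^*(s^h),s^t)] \le \E[y(s^h,s^t)]$ with the equilibrium-existence inequality $\E[z\,|\,g^*,g^l_{-i}] \le \E[z\,|\,g^l,g^l_{-i}]$ to compare against the defining equation of $p_{\textup{ds}}$. The only difference is presentational: the paper works with the equality characterizing $p_{\textup{el}}$ as a threshold, while you work directly with the weak-deviation inequality, and you make explicit the (correct) observations about mixed strategies and why the $(0,0)$ hypothesis suffices for all $c^E \ge 0$.
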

\begin{proof}
	Please see Appendix \ref{sec:proof_lemma_el_ds}.
\end{proof}

In Lemma~\ref{lemma_ex_ds}, we show that $p_{\textup{ex}}$ is greater than or equal to $p_{\textup{ds}}$ under certain assumptions.  The intuition is that, when no spot check is performed, the $g^l$ equilibrium Pareto dominates the truthful equilibrium.  Thus, assuming that the $g^l$ equilibrium exists, it is more costly (in terms of increasing spot check probability) to make the truthful equilibrium Pareto dominate the $g^l$ equilibrium than to motivate a single agent to report truthfully.  

\newcommand{\lemmaExDS}{For any spot-checking peer-prediction mechanism, if the $g^l$ equilibrium exists and Pareto dominates the truthful equilibrium when $c^E = 0$ and $p=0$, then $p_{\textup{ex}} \ge p_{\textup{ds}}$ for all $c^E \ge 0$.}
\begin{lemma} \label{lemma_ex_ds}
\lemmaExDS
\end{lemma}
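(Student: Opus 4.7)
My plan is to express the ex-ante expected utility of each agent $i$ in both the truthful and the $g^l$ equilibria as explicit functions of $p$ and $c^E$, and then rearrange the ``truthful Pareto dominates $g^l$'' inequality into a form in which the defining equation of $p_{\textup{ds}}$ from Lemma~\ref{lemma_prob_ds} appears directly. The key structural observation is that the expected unchecked reward $z$ under any fixed strategy profile depends only on that profile, not on $p$ or $c^E$, so the $(p=0, c^E=0)$ hypothesis transfers cleanly to general $(p, c^E)$.

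Concretely, reducing to a single representative object by linearity over $J_i$, I would write the per-object utility of agent $i$ in the truthful and $g^l$ equilibria as
\begin{align*}
U_i^T(p, c^E) &= p\, \E[y(s^h, s^t)] + (1-p)\, \E[z_{ij} \mid T] - c^E, \\
U_i^l(p) &= p\, \E[y(g^l(s^l), s^t)] + (1-p)\, \E[z_{ij} \mid g^l],
\end{align*}
where $\E[z_{ij} \mid T]$ and $\E[z_{ij} \mid g^l]$ denote the expected unchecked reward under the truthful and $g^l$ strategy profiles, respectively. The hypothesis that the $g^l$ equilibrium Pareto dominates the truthful equilibrium when $p=0$ and $c^E=0$ then becomes $\E[z_{ij} \mid g^l] \ge \E[z_{ij} \mid T]$ for every agent $i$, because at $p=0$ both the spot-check term and the effort cost vanish from the utilities above.

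Next, suppose that at some $(p, c^E)$ with $c^E \ge 0$ the truthful equilibrium Pareto dominates the $g^l$ equilibrium. By definition this requires $U_i^T(p, c^E) \ge U_i^l(p)$ for every $i$, which after rearrangement reads
\[
p\bigl(\E[y(s^h, s^t)] - \E[y(g^l(s^l), s^t)]\bigr) \;\ge\; c^E + (1-p)\bigl(\E[z_{ij} \mid g^l] - \E[z_{ij} \mid T]\bigr).
\]
The bracketed quantity on the right is nonnegative by the previous step, so the right-hand side is at least $c^E$, giving $p\bigl(\E[y(s^h, s^t)] - \E[y(g^l(s^l), s^t)]\bigr) \ge c^E$. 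Comparing with the defining equation of $p_{\textup{ds}}$ in Lemma~\ref{lemma_prob_ds} yields $p \ge p_{\textup{ds}}$. Taking the infimum of such $p$ gives $p_{\textup{ex}} \ge p_{\textup{ds}}$ for every $c^E \ge 0$, since $c^E$ entered only additively and the hypothesis was used at $c^E = 0$.

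The main obstacle is bookkeeping rather than depth: I must verify that $\E[z_{ij} \mid \cdot]$ genuinely does not depend on $p$ or $c^E$ (which follows from the definition of a spot-checking mechanism, in which $z$ is evaluated only when no spot check occurs), and that the appropriate reading of $p_{\textup{ex}}$ requires the utility inequality to hold for \emph{every} agent (so the per-agent rearrangement is legitimate). Both points follow immediately from the definitions but should be stated explicitly so that the transfer from the stated $c^E = 0$ hypothesis to the conclusion for arbitrary $c^E \ge 0$ is unambiguous.
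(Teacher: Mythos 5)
Your proposal is correct and follows essentially the same route as the paper's proof: both reduce the claim to the observation that Pareto dominance of the $g^l$ equilibrium at $p=0$, $c^E=0$ forces $\E[z]$ under the $g^l$ profile to weakly exceed $\E[z]$ under the truthful profile, and then compare the resulting inequality in $p$ against the defining equation of $p_{\textup{ds}}$ from Lemma~\ref{lemma_prob_ds}. The only cosmetic difference is that the paper substitutes the indifference point $p_{\textup{ex}}$ directly into the $p_{\textup{ds}}$ equation, whereas you argue for arbitrary $p$ at which dominance holds and take an infimum.
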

\begin{proof}
	Please see Appendix \ref{sec:proof_lemma_ex_ds}.
\end{proof}

If the conditions in Lemmas~\ref{lemma_el_ds} and~\ref{lemma_ex_ds} are satisfied, it is clear that $p_{\textup{Pareto}} \ge p_{\textup{ds}}$ because $\min(p_{\textup{el}}, p_{\textup{ex}})$, which lower bounds $p_{\textup{Pareto}}$, is already greater than or equal to $p_{\textup{ds}}$.  Thus, a sufficient condition for $p_{\textup{Pareto}} \ge p_{\textup{ds}}$ is simply all conditions in the two lemmas, as shown in Theorem~\ref{theorem_suff_condition_pareto}.

\newcommand{\theoremSuffConditionParetoI}{Sufficient condition for Pareto comparison}
\newcommand{\theoremSuffConditionParetoII}{For any spot-checking peer-prediction mechanism, if the $g^l$ equilibrium exists and Pareto dominates the truthful equilibrium when $c^E = 0$ and $p = 0$, then $p_{\textup{Pareto}} \ge p_{\textup{ds}}$ for all $c^E \ge 0$.}

\begin{theorem}[\theoremSuffConditionParetoI]
\label{theorem_suff_condition_pareto}
\theoremSuffConditionParetoII
\end{theorem}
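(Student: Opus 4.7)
The plan is to observe that this theorem is essentially an immediate corollary of Lemmas~\ref{lemma_el_ds} and~\ref{lemma_ex_ds} combined with the definition of $p_{\textup{Pareto}}$. The real work has already been done in those two lemmas; the theorem just stitches them together. I note that the hypothesis of the theorem (the $g^l$ equilibrium exists and Pareto dominates truth-telling at $c^E=0$, $p=0$) is precisely the hypothesis of Lemma~\ref{lemma_ex_ds}, and it also implies the weaker hypothesis of Lemma~\ref{lemma_el_ds} (which requires only existence of the $g^l$ equilibrium). So I can apply both lemmas.

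The first step is to justify the key inequality $p_{\textup{Pareto}} \ge \min(p_{\textup{el}}, p_{\textup{ex}})$. I would argue: if the truthful equilibrium is Pareto dominant at spot-check probability $p$, then in particular it must (weakly) Pareto dominate every other equilibrium. If the $g^l$ equilibrium still exists at that $p$ (i.e., $p < p_{\textup{el}}$), then the truthful equilibrium must Pareto dominate it, which by definition of $p_{\textup{ex}}$ requires $p \ge p_{\textup{ex}}$. Contrapositively, any $p$ at which truth-telling is Pareto dominant must satisfy $p \ge p_{\textup{el}}$ or $p \ge p_{\textup{ex}}$, so $p_{\textup{Pareto}} \ge \min(p_{\textup{el}}, p_{\textup{ex}})$.

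The second step is to chain the inequalities. Applying Lemma~\ref{lemma_el_ds} gives $p_{\textup{el}} \ge p_{\textup{ds}}$ (for all $c^E \ge 0$), and applying Lemma~\ref{lemma_ex_ds} gives $p_{\textup{ex}} \ge p_{\textup{ds}}$ (for all $c^E \ge 0$). Taking the minimum preserves the lower bound, so $\min(p_{\textup{el}}, p_{\textup{ex}}) \ge p_{\textup{ds}}$, and combining with the previous step yields $p_{\textup{Pareto}} \ge p_{\textup{ds}}$ as required.

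There is no real obstacle in this theorem itself — all of the technical difficulty is absorbed into the two lemmas. The only subtlety worth flagging in the write-up is the case $p_{\textup{el}} = \infty$ (if the $g^l$ equilibrium is never eliminated for any $p \in [0,1]$), in which case the bound $p_{\textup{Pareto}} \ge p_{\textup{ex}}$ suffices on its own; and symmetrically if $p_{\textup{ex}} = \infty$ (the truthful equilibrium never Pareto dominates $g^l$ while $g^l$ exists), the bound $p_{\textup{Pareto}} \ge p_{\textup{el}}$ suffices. Either way the conclusion $p_{\textup{Pareto}} \ge p_{\textup{ds}}$ follows from whichever lemma applies.
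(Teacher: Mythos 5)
Your proposal is correct and follows essentially the same route as the paper's proof: the hypothesis feeds into both Lemma~\ref{lemma_el_ds} and Lemma~\ref{lemma_ex_ds}, the bound $p_{\textup{Pareto}} \ge \min(p_{\textup{el}}, p_{\textup{ex}})$ is established, and the two lemmas are chained to conclude $p_{\textup{Pareto}} \ge p_{\textup{ds}}$. Your explicit contrapositive justification of the $\min$ inequality and your handling of the degenerate cases where $p_{\textup{el}}$ or $p_{\textup{ex}}$ is undefined are slightly more careful than the paper's version, but they do not change the argument.
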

\begin{proof}
	Please see Appendix \ref{sec:proof_theorem_suff_condition_pareto}.
\end{proof}

We now show that, under very natural conditions, \emph{every} universal peer-prediction mechanism of which we are aware in the literature satisfies the conditions of Theorem~\ref{theorem_suff_condition_pareto}; hence, in this setting, the peer-insensitive spot-checking mechanism requires less ground truth access than any spot-checking peer-prediction mechanism.

First, we assume that the low-quality signal $s^l$ is drawn from a uniform distribution over $Q$; this is essentially without loss of generality, since in any setting where the agents see a description of the object as well as their evaluation, a distribution of this form can be obtained by, e.g., hashing the description.  More realistically, objects may have names that are approximately uniformly distributed.
Second, we fix the spot check mechanism as in Equation~\eqref{eq:spot-check-fn}, using a form inspired by~\citet{dasgupta2013crowdsourced}.  Let $J^t$ be the set of objects that was spot-checked.
Let $i$ be an agent whose report $r_{ij}$ on object $j \in J_i$ has been spot checked.
Let $j' \in J_i$ be an object that $j$ evaluated, chosen uniformly at random, and let $j'' \in J^t \setdiff J_i$ be a spot-checked object, also chosen uniformly at random.\footnote{Note that in \citet{dasgupta2013crowdsourced}, it is important for strategic reasons that object $j'$ has not been evaluated by the opposing agent; this is not important in our setting, since the trusted reports are assumed to be nonstrategic.}  Then agent $i$'s reward for object $j$ is 
\begin{equation}
    y_{ij}(r_i, s^t) = \Ind_{r_{ij} = s^t_j} - \Ind_{r_{ij'} = s^t_{j''}}. \label{eq:spot-check-fn}
\end{equation}

\newcommand{\lemmaBestNoEffortStr}{For the spot check reward function in Equation~\eqref{eq:spot-check-fn}, an agent's best strategy conditional on not investing effort is always to report the low-quality signal $s^l$.}

\begin{lemma}
\label{lemma_best_no_effort_strategy}
	\lemmaBestNoEffortStr
\end{lemma}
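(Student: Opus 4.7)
The plan is to compute the expected spot-check reward for an arbitrary no-effort strategy and to show that it takes the same value---in fact, zero---regardless of the strategy chosen. Once this is established, truthful reporting of $s^l$ (the identity map $g(s) = s$) is automatically among the maximizers defining $g^l$.

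First I would fix any pure no-effort strategy $g: Q \to Q$. Under it, agent $i$ reports $g(s^l_{ij})$ on every object, so the expected spot-check reward from Equation~\eqref{eq:spot-check-fn} decomposes as
\[
\Pr[g(s^l_j) = s^t_j] \;-\; \Pr[g(s^l_{j'}) = s^t_{j''}].
\]
For the first probability, $s^l_j$ is uncorrelated with the object's quality $q_j$ (which the model uses in the sense of independence), while $s^t_j$ is a noisy function of $q_j$ alone; hence $s^l_j \perp s^t_j$. Combined with the standing assumption that $s^l$ is uniform on $Q$, and writing $\mu$ for the marginal of $s^t$ on any object, this yields
\[
\Pr[g(s^l_j) = s^t_j] \;=\; \frac{1}{|Q|}\sum_{s \in Q} \mu(g(s)).
\]

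For the second probability, the indices $j'$ and $j''$ necessarily correspond to distinct objects ($j' \in J_i$ while $j'' \in J^t \setdiff J_i$). Under the standard assumption of iid object qualities, $s^l_{j'} \perp s^t_{j''}$ and $s^t_{j''}$ shares the same marginal $\mu$, so an identical calculation gives the same value $\frac{1}{|Q|}\sum_{s \in Q} \mu(g(s))$. The two terms cancel, giving expected reward zero for every pure strategy $g$; linearity of expectation extends this to mixed strategies, so the identity mapping lies in the argmax and $g^l$ may be taken to be ``report $s^l$ truthfully.''

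The main subtlety is a modelling one rather than a mathematical one: I need the paper's informal ``uncorrelated'' in its description of $s^l$ to mean probabilistically independent, and I need object qualities to be iid so that the marginals of $s^t_j$ and $s^t_{j''}$ agree. Both are implicit in the surrounding setup but should be invoked explicitly when the two probabilities are shown to cancel.
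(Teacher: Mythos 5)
Your proof is correct, but it takes a genuinely different route from the paper's. The paper's proof writes down the general expression for the expected spot-check reward of a no-effort reporter and then checks only two candidates: a constant report (shown to earn exactly $0$ by independence of the report from $s^t$) and truthful reporting of $s^l$ (shown to earn $\ge 0$ via the uniformity of $s^l$); arbitrary garblings $g(s^l)$ are never treated explicitly, and the ``$\ge 0$'' step is left largely implicit. You instead prove the stronger and cleaner statement that \emph{every} no-effort strategy earns exactly zero, because $s^l_j \perp s^t_j$ (the low-quality signal is independent of quality, and $s^t_j$ depends only on quality) makes the same-object term $\pr[g(s^l_j)=s^t_j]$ equal to the cross-object term $\pr[g(s^l_{j'})=s^t_{j''}]$ once the marginals of $s^t$ agree across objects. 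This buys full coverage of the strategy space $G^l$ (including mixed strategies, by linearity) and makes explicit the two modelling assumptions---independence of $s^l$ from quality, and identically distributed object qualities---that the paper's argument uses silently. The one thing worth flagging is that your argument shows the argmax defining $g^l$ is all of $G^l$, so ``report $s^l$'' is only \emph{a} best no-effort strategy, tied with every other; that is exactly the (weak) sense in which the paper's own proof establishes the lemma, and it is all that the downstream results require, but it does mean the selection of the identity map as $g^l$ is a convention rather than a consequence.
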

\begin{proof}
	Please see Appendix~\ref{proof_lemma_best_no_effort_strategy}.
\end{proof}

\newcommand{\corollaryOne}{For spot-checking peer-prediction mechanisms based on~\citet{faltings2012eliciting,witkowski2013dwelling,dasgupta2013crowdsourced,waggoner2014output,kamble2015truth,radanovic2015incentives} and \citet{shnayder2016informed}, the minimum spot check probability $p_{\textup{Pareto}}$ for the Pareto dominance of the truthful equilibrium is greater than or equal to the minimum spot check probability $p_{\textup{ds}}$ at which the truthful strategy is a dominant strategy for the peer-insensitive mechanism.}

\begin{corollary} 
\label{corollary_1}
	\corollaryOne
\end{corollary}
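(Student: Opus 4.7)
The plan is to invoke Theorem~\ref{theorem_suff_condition_pareto} by verifying its two hypotheses for each of the listed mechanisms at $p=0$ and $c^E=0$: that the $g^l$ equilibrium exists, and that it Pareto dominates the truthful equilibrium. Lemma~\ref{lemma_best_no_effort_strategy} identifies the $g^l$ strategy as simply reporting the observed $s^l$, so the $g^l$ profile has every agent reporting the commonly observed low-quality signal on each object.

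The key structural observation driving both hypotheses is that, because $s^l$ is perfectly correlated across agents, the $g^l$ profile yields \emph{perfect} agreement ($r_{ij}=r_{i'j}$) on every object, whereas the truthful profile yields only the imperfect agreement induced by noise in the high-quality signal. Each of the listed mechanisms builds its reward out of an agreement indicator, possibly combined with a scaling term computed from empirical report frequencies. Existence of the $g^l$ equilibrium follows in each case because unilaterally deviating from the common report destroys the agreement indicator while the scaling terms are essentially unaffected. Pareto dominance follows because the agreement indicator attains its maximum value under $g^l$, and the scaling correction terms evaluate favorably when $s^l$ is uniform over $Q$.

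The execution proceeds by case analysis. For the output agreement mechanisms of~\citet{witkowski2013dwelling} and~\citet{waggoner2014output} and for~\citet{faltings2012eliciting}, direct substitution immediately yields strictly larger expected reward in $g^l$ than in truthful play. For~\citet{shnayder2016informed} (which subsumes~\citet{dasgupta2013crowdsourced}), the reward evaluates to $1 - 1/|Q|$ under $g^l$ and to $\sum_{s \in Q} \Pr[s^h_{ij}=s,s^h_{i'j}=s] - \sum_{s \in Q} \Pr[s^h=s]^2$ under truthfulness, which is bounded above by $1 - 1/|Q|$ by combining $\sum_s \Pr[s^h_{ij}=s,s^h_{i'j}=s] \le 1$ with $\sum_s \Pr[s^h=s]^2 \ge 1/|Q|$ (Jensen's inequality applied to the signal marginal). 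Analogous computations for~\citet{kamble2015truth} (using uniformity of $s^l$ to obtain $\hat f(s)=1/\sqrt{|Q|}$ in the large-$N$ limit, so the reward scales as $K\sqrt{|Q|}$ under $g^l$) and~\citet{radanovic2015incentives} (after verifying that the double-mixed and matching-object side conditions hold almost surely under uniform $s^l$ across enough objects, so that the reward evaluates to $1$ under $g^l$) complete the argument. The main obstacle is precisely this last step: the Kamble--Boutilier and Radanovic--Faltings rewards depend on auxiliary sample-based random events whose probabilities must be computed explicitly under both profiles before the dominance comparison can be made. Once the two hypotheses are verified for each mechanism, Theorem~\ref{theorem_suff_condition_pareto} immediately delivers $p_{\textup{Pareto}} \ge p_{\textup{ds}}$.
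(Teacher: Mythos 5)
Your proposal follows essentially the same route as the paper: use Lemma~\ref{lemma_best_no_effort_strategy} to identify the $g^l$ strategy as reporting $s^l$, then verify mechanism-by-mechanism that the $s^l$ equilibrium exists and Pareto dominates the truthful equilibrium at $p=0$, $c^E=0$, and invoke Theorem~\ref{theorem_suff_condition_pareto}; your per-mechanism computations (the $1-1/|Q|$ bound via Jensen for \citet{shnayder2016informed}, the $K\sqrt{|Q|}$ limit for \citet{kamble2015truth}, the double-mixed analysis for \citet{radanovic2015incentives}) match the paper's. The only slight discrepancy is that for \citet{faltings2012eliciting} the dominance is weak (both equilibria yield $\alpha+\beta$), not strict as you claim, but weak Pareto dominance is all the theorem requires.
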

\begin{proof}
	Please see Appendix \ref{proof_corollary_1}.
\end{proof}

\newcommand{\corollaryBeliefMech}{For spot-checking peer-prediction mechanisms based on ~\citet{witkowski2012robust,witkowski2013learning,radanovic2013robust,radanovic2014incentives} and \citet{riley2014minimum}, if the peer-prediction mechanism uses a symmetric proper scoring rule, then the minimum spot check probability $p_{\textup{Pareto}}$ for the Pareto dominance of the truthful equilibrium is greater than or equal to the minimum spot check probability $p_{\textup{ds}}$ at which the truthful strategy is a dominant strategy for the peer-insensitive mechanism.}

\begin{corollary}
\label{corollary_belief_mechanisms}
	\corollaryBeliefMech
\end{corollary}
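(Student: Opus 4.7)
The plan is to apply Theorem~\ref{theorem_suff_condition_pareto}: for each of the four belief-based mechanisms (robust BTS, multi-valued robust BTS, divergence-based BTS, and Riley), it suffices to verify that the $g^l$ equilibrium exists and Pareto dominates the truthful equilibrium when $c^E = 0$ and $p = 0$. By Lemma~\ref{lemma_best_no_effort_strategy}, in the $g^l$ equilibrium every agent reports the signal $r_{ij} = s^l_j$; since low-quality signals are perfectly correlated, each agent also knows exactly what every other agent reports and so submits the degenerate belief $b_{ij} = \delta_{s^l_j}$.

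The key fact I would exploit is that for a \emph{symmetric} proper scoring rule, $R(\delta_o, o) = R^*$ is a constant maximum that does not depend on the outcome $o$, and $R(p, o) \le R^*$ with strict inequality whenever $p \neq \delta_o$. Hence in the $g^l$ equilibrium every scoring-rule term of the reward attains $R^*$, whereas in the truthful equilibrium each agent's posterior over their peer's noisy high-quality signal is non-degenerate, giving expected scoring-rule reward strictly less than $R^*$. I would then check, mechanism by mechanism, that the remaining reward components do not reverse this comparison. For robust BTS the reward is a sum of two scoring-rule terms, so the claim is immediate. For multi-valued robust BTS a short standard calculation shows that $\E[\tfrac{1}{b_j(r_i)}\Ind_{r_i = r_j}] = 1$ under truthful play, matching its value of $1$ in the $g^l$ equilibrium. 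For divergence-based BTS the penalty $-\Ind_{r_i = r_j \wedge D(b_i, b_j) > \theta}$ vanishes in both equilibria because agents who report the same signal also report the same belief, so $D(b_i, b_j) = 0$. For the Riley mechanism the $g^l$ equilibrium has $\delta_i = 0$, reducing the reward to $R^*$, while in the truthful equilibrium the reward is bounded above by $R(b_i, r_{-i}) < R^*$.

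The main obstacle, besides the case check above, will be verifying existence of the $g^l$ equilibrium at $p = 0$: I would argue that any unilateral signal deviation from $s^l$ simultaneously breaks agreement (strictly reducing agreement-based terms) and misaligns the deviator's degenerate belief with their own reported signal, while any belief deviation from $\delta_{s^l}$ strictly reduces the scoring-rule component by strict propriety, so no profitable deviation exists. Combined with the Pareto-dominance argument above, Theorem~\ref{theorem_suff_condition_pareto} then yields the corollary.
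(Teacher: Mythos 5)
Your proposal is correct and follows essentially the same route as the paper's proof: invoke Lemma~\ref{lemma_best_no_effort_strategy} to pin down the $g^l$ strategy, use the symmetry of the proper scoring rule to show that the degenerate-belief $s^l$ equilibrium attains the maximal score while the truthful equilibrium's scoring-rule terms cannot exceed it, check the remaining reward components mechanism by mechanism, and conclude via Theorem~\ref{theorem_suff_condition_pareto}. The differences are cosmetic (e.g., you note that the divergence-based penalty vanishes in both equilibria and that some deviations are only weakly unprofitable, whereas the paper phrases these points slightly differently), and since weak Pareto dominance suffices, your argument goes through.
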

\begin{proof}
	Please see Appendix \ref{proof_corollary_belief_mechanisms}.
\end{proof}

\section{Conclusions and Future Work}

We consider the problem of using limited access to noisy but unbiased ground truth to incentivize agents to invest costly effort in evaluating and truthfully reporting the quality of some object of interest.  Absent such spot-checking, peer-prediction mechanisms already guarantee the existence of a truthful equilibrium that induces both effort and honesty from the agents.  However, this truthful equilibrium may be less attractive to the agents than other, uninformative equilibria.

Some mechanisms in the literature have been carefully designed to ensure that the truthful equilibrium is the most attractive equilibrium to the agents (i.e., Pareto dominates all other equilibria).  However, these mechanisms rely crucially on the unrealistic assumption that agents' only means of correlating are via the signals that the mechanism aims to elicit.  We show that under  the more realistic assumption that agents have access to more than one signal, no universal peer-prediction mechanism has a Pareto dominant truthful equilibrium in all elicitable settings.

In contrast, we present a simpler peer-insensitive mechanism that provides incentives for effort and honesty only by checking the agents' reports against ground truth.  While one might have expected that peer-prediction would require less frequent access to ground truth to achieve stronger incentive properties than the peer-insensitive mechanism, we proved the opposite for all universal spot-checking peer-prediction mechanisms.

This surprising finding is intuitive in retrospect.  peer-prediction mechanisms can only motivate agents to behave in a certain way as a group.  An agent has a strong incentive to be truthful if all other agents are truthful; conversely, when all other agents coordinate on investing no effort, the agent again has a strong incentive to coordinate with the group.  peer-prediction mechanisms thus need to provide a strong enough incentive for agents to deviate from the most attractive uninformative equilibrium in the worst case, whereas the peer-insensitive mechanism only needs to motivate effort and honesty in an effectively single-agent setting.  

Many exciting future directions remain to be explored.
For example, we assumed that the principal does not care about the total amount of the artificial currency rewarded to the agents.  One possible direction would consider a setting in which the principal seeks to minimize both spot checks and the agents' rewards.
Also, in our analysis, we assumed that the spot check probability does not depend on the agents' reports.  Conditioning the spot check probability on the agents' reports might allow the mechanism to more efficiently detect and punish uninformative equilibria.

\bibliographystyle{plainnat}
\bibliography{references,peerprediction,peergrading}

\appendix

\providecommand{\restatetheorem}[1]{\renewcommand{\thetheorem}{\ref{#1}}}
\providecommand{\restatelemma}[1]{\renewcommand{\thetheorem}{\ref{#1}}}
\providecommand{\restatecorollary}[1]{\renewcommand{\thetheorem}{\ref{#1}}}

\section{Proof of Lemma~\ref{lemma_prob_ds}}
\label{sec:proof_lemma_ds}

\restatelemma{lemma_prob_ds}
\begin{lemma}
	\lemmaProbDS
\end{lemma}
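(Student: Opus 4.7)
The plan is to exploit the defining feature of the peer-insensitive mechanism: its unchecked payoff is a constant $W$, independent of anyone's report. Consequently, for any strategy profile chosen by the other agents, agent $i$'s expected reward depends only on $i$'s own reporting strategy and on whether $i$ is spot-checked. This reduces the analysis of dominant strategies to a single-agent optimization problem, in which I need only compare the expected reward of the truthful strategy against the expected reward of every other pure strategy (mixed strategies cannot do strictly better than the best pure strategy in expectation).

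First I would write down the expected reward of each type of pure strategy. For the truthful strategy the agent pays $c^E$ to observe $s^h$ and reports it, yielding
\[
    U_{\text{truth}}(p) \;=\; p\,\E[y(s^h, s^t)] + (1-p)\,W - c^E.
\]
For an arbitrary full-effort strategy $g^h \in G^h$ the agent still pays $c^E$ and obtains $p\,\E[y(g^h(s^h), s^t)] + (1-p)W - c^E$, and for an arbitrary no-effort strategy $g \in G^l$ the agent obtains $p\,\E[y(g(s^l), s^t)] + (1-p)W$. By definition of $g^l$, the best no-effort strategy yields $p\,\E[y(g^l(s^l), s^t)] + (1-p)W$. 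Among full-effort strategies, truthful reporting is weakly best; this is the one point of the argument that requires comment, and it follows because the spot-check reward $y$ is designed to reward matching $s^t$, while $s^h$ is (by the assumption preceding the section, $\E[y(s^h, s^t)-c^E] > \E[y(s^l,s^t)]$) the most informative available signal about $s^t$. Any post-processing $g^h$ of $s^h$ can only weakly reduce the conditional information about $s^t$, so $\E[y(s^h, s^t)] \geq \E[y(g^h(s^h), s^t)]$.

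Given this, the binding constraint for dominance of the truthful strategy is the comparison with $g^l$. Dominance holds iff
\[
    p\,\E[y(s^h, s^t)] - c^E \;\geq\; p\,\E[y(g^l(s^l), s^t)],
\]
and by the assumption above the coefficient of $p$ on the left strictly exceeds that on the right, so the constraint is monotone in $p$ and has a unique threshold. The minimum $p$ at which it holds is the one attaining equality, which is exactly the equation
\[
    p_{\textup{ds}}\,\E[y(s^h, s^t)] - c^E \;=\; p_{\textup{ds}}\,\E[y(g^l(s^l), s^t)]
\]
claimed in the lemma.

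The only step I expect to require care is the justification that truthful reporting is the best full-effort strategy; this can be argued either by appealing to properties of the specific spot-check function $y$ used later in the paper, or by observing that any deterministic post-processing $g^h$ of $s^h$ produces a signal whose distribution over $Q$ is weakly less informative about $s^t$ than $s^h$ itself, so no post-processing can increase $\E[y(\cdot, s^t)]$ beyond $\E[y(s^h, s^t)]$ when $y$ rewards agreement with $s^t$. Everything else is a direct computation from the two expected-utility expressions.
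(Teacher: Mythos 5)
Your proposal is correct and follows essentially the same route as the paper: write the expected utility of the truthful strategy, $p\,\E[y(s^h,s^t)]+(1-p)W-c^E$, and of the best no-effort strategy $g^l$, $p\,\E[y(g^l(s^l),s^t)]+(1-p)W$, cancel the common $(1-p)W$ term, and set them equal at the threshold. The only addition is your explicit check that truthful reporting is the best full-effort strategy (which the paper leaves implicit here and invokes elsewhere as ``truthfulness of spot checks''); this is a property of the spot-check reward $y$ rather than a consequence of data processing alone, but assuming it is exactly what the paper does.
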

\begin{proof}
Consider the peer insensitive mechanism with a fixed spot check probability $p \ge 0$.  
When an agent uses the truthful strategy, his expected utility is
\begin{align}
p\, \E [y(s^h, s^t) ] + (1-p)\, W - c^E. \label{eqn_ds_1}
\end{align}
When an agent invests no effort, his best strategy is $g^l$.  His expected utility from playing the $g^l$ strategy is
\begin{align}
p\, \E[y(g^l(s^l), s^t) ]  + (1-p)\, W. \label{eqn_ds_2}
\end{align}

When $p = p_{\textup{ds}}$, it must be that an agent's expected utilities in the above two expressions~\eqref{eqn_ds_1} and~\eqref{eqn_ds_2} are the same.  
\begin{align}
p_{\textup{ds}}\, \E [y(s^h, s^t) ] + (1-p_{\textup{ds}})\, W - c^E &= p_{\textup{ds}}\, \E[y(g^l(s^l), s^t) ]  + (1-p_{\textup{ds}})\, W  \notag \\
p_{\textup{ds}}\, \E [y(s^h, s^t) ] - c^E &= p_{\textup{ds}}\, \E[y(g^l(s^l), s^t) ]. \notag
\end{align}
\end{proof}

\section{Proof of Lemma~\ref{lemma_el_ds}}
\label{sec:proof_lemma_el_ds}

\restatelemma{lemma_el_ds}
\begin{lemma}
	\lemmaElDS
\end{lemma}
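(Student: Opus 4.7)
My plan is to analyze which deviations from the $g^l$ equilibrium can be profitable at a given spot check probability $p$ and cost of effort $c^E$, and to show that no deviation succeeds whenever $p < p_{\textup{ds}}$. I will first partition the possible deviations into no-effort strategies $g \in G^l$ and full-effort strategies $g \in G^h$. For any $g \in G^l$, the spot-check payoff is at most $\E[y(g^l(s^l), s^t)]$ by the very definition of $g^l$, and the unchecked payoff is at most $\E[z_i(g^l, g^l_{-i})]$ (argued in the next paragraph); hence no no-effort deviation is ever profitable, and it suffices to rule out full-effort deviations.

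The crux of the argument is that, since $g^l$ is an equilibrium at $p = 0$ and $c^E = 0$, we must have
\begin{equation*}
    B(g) := \E[z_i(g, g^l_{-i})] - \E[z_i(g^l, g^l_{-i})] \le 0 \qquad \text{for every strategy } g,
\end{equation*}
because at $p=0, c^E=0$ an agent's utility against $g^l_{-i}$ is exactly the unchecked reward. I will then expand the net gain of switching from $g^l$ to any $g \in G^h$ at a general $(p, c^E)$ as
\begin{equation*}
    p\bigl(\E[y(g(s^h), s^t)] - \E[y(g^l(s^l), s^t)]\bigr) + (1-p)\,B(g) - c^E,
\end{equation*}
and invoke the fact that truthful reporting of $s^h$ maximizes $\E[y(g(s^h), s^t)]$ over $g \in G^h$. (This optimality is implicit in the Lemma~\ref{lemma_prob_ds} characterization of $p_{\textup{ds}}$; otherwise the truthful strategy could not be dominant at any $p$, and $p_{\textup{ds}}$ would not be well defined.) Combined with $B(g) \le 0$, this upper-bounds the net gain by $p\bigl(\E[y(s^h, s^t)] - \E[y(g^l(s^l), s^t)]\bigr) - c^E$.

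To eliminate the $g^l$ equilibrium, some deviation must yield strictly positive net gain, so the upper bound forces $p\bigl(\E[y(s^h, s^t)] - \E[y(g^l(s^l), s^t)]\bigr) > c^E$, which by Lemma~\ref{lemma_prob_ds} is exactly $p > p_{\textup{ds}}$; therefore $p_{\textup{el}} \ge p_{\textup{ds}}$. The main conceptual obstacle is that full-effort deviations could in principle exploit peer-prediction rewards in intricate ways alongside spot-check rewards, but the $g^l$-equilibrium hypothesis at $(p, c^E) = (0, 0)$ kills this possibility uniformly: the peer-prediction component can never help a deviator, so the entire burden of inducing deviation must be borne by the spot-check term, which is no larger than in the effectively single-agent peer-insensitive mechanism analyzed in Lemma~\ref{lemma_prob_ds}.
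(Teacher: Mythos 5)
Your proof is correct and follows essentially the same route as the paper's: both arguments rest on the two key facts that the $g^l$-equilibrium hypothesis at $(p,c^E)=(0,0)$ forces the unchecked-reward gain $B(g)\le 0$ for every deviation, and that truthful reporting maximizes the spot-check reward, so the deviation incentive is bounded by the single-agent spot-check term from Lemma~\ref{lemma_prob_ds}. The only difference is presentational — you bound the net gain of an arbitrary deviation directly, whereas the paper first writes the indifference equation defining $p_{\textup{el}}$ and then substitutes it into Equation~\eqref{eqn_ds} — but the inequalities invoked are identical.
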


\begin{proof}
Recall that $p_{\textup{el}}$ is the minimum spot check probability at which the $g^l$ equilibrium is eliminated.  
We first derive an expression for $p_{\textup{el}}$.

We consider a spot checking peer prediction mechanism.  By our assumption, the $g^l$ equilibrium exists when $c^E = 0$ and the spot check probability is $0$.  

Assume that all other agents play the $g^l$ strategy and analyze agent $i$'s best response.
First, we note that, if agent $i$ invests no effort, then agent $i$'s best strategy is the $g^l$ strategy for any spot check probability.  
(To maximize his spot check reward $y$, he should play the $g^l$ strategy by the definition of the $g^l$ strategy.  To maximize his non spot check reward, his best strategy is also the $g^l$ strategy because the $g^l$ equilibrium exists at $p = 0$.  )
Thus, to eliminate the $g^l$ equilibrium, we need to increase the spot check probability until agent $i$ prefers to play his best strategy conditional on investing full effort.  

Consider a fixed spot check probability $p$ and suppose that the $g^l$ equilibrium exists at this spot check probability.  Suppose that all other agents play the $g^l$ strategy.  

If agent $i$ does not invest effort, his best response is to also play the $g^l$ strategy and his expected utility is
\begin{align}
p\, \E[y(g^l(s^l), s^t) ] + (1-p)\, \E[z(g^l(s^l), g^l(s^l))]. \label{eqn_el_1}
\end{align}

If agent $i$ invests full effort, let $g^{\textup{br}}$ denote agent $i$'s best response and his expected utility by playing this best response is
\begin{align}
p\, \E[ y(g^{\textup{br}}(s^h), s^t)  ] + (1-p)\, \E[z(g^{\textup{br}}(s^h), g^l(s^l))] - c^E. \label{eqn_el_2}
\end{align}

By definition of $p_{\textup{el}}$, when $p = p_{\textup{el}}$, an agent's expected utility in the above two expressions~\eqref{eqn_el_1} and~\eqref{eqn_el_2} are the same.  Thus $p_{\textup{el}}$ must satisfy
\begin{align}
&p_{\textup{el}}\, \E [ y(g^{\textup{br}}(s^h), s^t)  ] + (1-p_{\textup{el}})\, \E[ z(g^{\textup{br}}(s^h), g^l(s^l))  ] - c^E \notag\\
&\qquad\qquad = p_{\textup{el}}\, \E[ y(g^l(s^l), s^t)  ] + (1-p_{\textup{el}})\, \E[ z(g^l(s^l), g^l(s^l))  ] \notag \\
&p_{\textup{el}}\, \E [ y(g^{\textup{br}}(s^h), s^t)  ] + (1-p_{\textup{el}})\, (\E[ z(g^{\textup{br}}(s^h), g^l(s^l))  ] - \E[ z(g^l(s^l), g^l(s^l))  ]) - c^E \notag\\
&\qquad\qquad = p_{\textup{el}}\, \E[ y(g^l(s^l), s^t)  ]. \label{unique_pprime}
\end{align}

Next, we would like to show that $p_{\textup{el}} \ge p_{\textup{ds}}$.

Since the $g^l$ equilibrium exists when $c^E = 0$ and $p = 0$, it follows from the definition of equilibrium that
\begin{equation}
    \E[ z(g^{\textup{br}}(s^h), g^l(s^l))  ] \le \E[ z(g^l(s^l), g^l(s^l)) ]. \label{eq:existence-comp}
\end{equation}

Taking $p_{\textup{el}}$ and substituting into the LHS of~\eqref{eqn_ds} (definition of $p_{\textup{ds}}$), in a setting with arbitrary positive $c^E \ge 0$, we have
\begin{align}
& p_{\textup{el}}\, \E [ y(s^h, s^t) ] - c^E \notag\\
&\ge  p_{\textup{el}}\, \E [ y(s^h, s^t) ] + (1-p_{\textup{el}})\, (\E[ z(g^{\textup{br}}(s^h), g^l(s^l))  ] - \E[ z(g^l(s^l), g^l(s^l))  ]) - c^E \label{unique_eqn1} \\
&> p_{\textup{el}}\, \E [ y(g^{\textup{br}}(s^h), s^t)  ]  + (1-p_{\textup{el}})\, (\E[ z(g^{\textup{br}}(s^h), g^l(s^l))  ] - \E[z(g^l(s^l), g^l(s^l))  ]) - c^E \label{unique_eqn2} \\
&=  p_{\textup{el}}\, \E[y(g^l(s^l), s^t)]. \label{unique_eqn3}
\end{align}
Inequality~\eqref{unique_eqn1} holds due to Equation~\eqref{eq:existence-comp}.
Inequality~\eqref{unique_eqn2} holds due to the truthfulness of spot checks: reporting high-quality signal maximizes the spot check reward.  Equation~\eqref{unique_eqn3} follows from Equation~\eqref{unique_pprime}.

Thus, if we substitute $p_{\textup{el}}$ into Equation~\eqref{eqn_ds}, then the resulting LHS is greater than the RHS.  By definition of $p_{\textup{ds}}$, it is the minimum spot check probability for which the LHS of~\eqref{eqn_ds} is     greater than its RHS.  Thus, it must be that $p_{\textup{el}} \ge p_{\textup{ds}}$.
\end{proof}

\section{Proof of Lemma~\ref{lemma_ex_ds}}
\label{sec:proof_lemma_ex_ds}

\restatelemma{lemma_ex_ds}
\begin{lemma}
	\lemmaExDS
\end{lemma}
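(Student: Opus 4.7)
The plan is to parallel the structure of the proof of Lemma~\ref{lemma_el_ds}: first derive a defining equation for $p_{\textup{ex}}$, then substitute $p_{\textup{ex}}$ into the definition of $p_{\textup{ds}}$ from Lemma~\ref{lemma_prob_ds} and use the hypothesis to show the inequality goes the right way.

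First I would write down the expected utilities in each equilibrium at an arbitrary spot check probability $p$. In the truthful equilibrium, an agent's ex-ante utility is $p\,\E[y(s^h,s^t)] + (1-p)\,\E[z(s^h,s^h)] - c^E$. In the $g^l$ equilibrium, it is $p\,\E[y(g^l(s^l),s^t)] + (1-p)\,\E[z(g^l(s^l),g^l(s^l))]$. By definition $p_{\textup{ex}}$ is the smallest spot check probability at which the first quantity equals (or first exceeds) the second, yielding the defining relation
\begin{align*}
p_{\textup{ex}}\,\E[y(s^h,s^t)] - c^E - p_{\textup{ex}}\,\E[y(g^l(s^l),s^t)]
  = (1-p_{\textup{ex}})\bigl(\E[z(g^l(s^l),g^l(s^l))] - \E[z(s^h,s^h)]\bigr).
\end{align*}

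Next I would use the hypothesis that the $g^l$ equilibrium Pareto dominates the truthful equilibrium when $c^E=0$ and $p=0$. At $p=0$ the truthful agent earns $\E[z(s^h,s^h)]$ and the $g^l$ agent earns $\E[z(g^l(s^l),g^l(s^l))]$, so Pareto dominance gives $\E[z(g^l(s^l),g^l(s^l))] \ge \E[z(s^h,s^h)]$. Hence the right-hand side of the defining relation is non-negative, which yields $p_{\textup{ex}}\,\E[y(s^h,s^t)] - c^E \ge p_{\textup{ex}}\,\E[y(g^l(s^l),s^t)]$ for any $c^E \ge 0$.

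Finally I would compare this to Equation~\eqref{eqn_ds}, which characterizes $p_{\textup{ds}}$ as the value at which $p\,\E[y(s^h,s^t)] - c^E$ equals $p\,\E[y(g^l(s^l),s^t)]$. The truthfulness of spot checks (guaranteed by our assumption on $y$) implies $\E[y(s^h,s^t)] > \E[y(g^l(s^l),s^t)]$, so the LHS of~\eqref{eqn_ds} is a strictly increasing affine function of $p$ with slope $\E[y(s^h,s^t)]-\E[y(g^l(s^l),s^t)]>0$ and zero-value $-c^E \le 0$, while the RHS is a smaller linear function through the origin. Thus $p_{\textup{ds}}$ is the unique solution, and any $p$ for which $p\,\E[y(s^h,s^t)] - c^E \ge p\,\E[y(g^l(s^l),s^t)]$ must satisfy $p \ge p_{\textup{ds}}$. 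Applying this to $p_{\textup{ex}}$ yields $p_{\textup{ex}} \ge p_{\textup{ds}}$.

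The argument is essentially algebraic manipulation once the defining equations are in hand; the only subtle point is ensuring that the hypothesis, stated only at $c^E=0$, propagates to all $c^E \ge 0$. This works because the $z$-terms do not depend on $c^E$ at all, so the Pareto-dominance inequality $\E[z(g^l(s^l),g^l(s^l))] \ge \E[z(s^h,s^h)]$ derived at $c^E=0$ continues to hold, and introducing a positive $c^E$ only strengthens the reason why truthful reporting needs a larger spot check probability.
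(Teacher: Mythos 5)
Your proposal is correct and follows essentially the same route as the paper's proof: derive the indifference equation defining $p_{\textup{ex}}$, use the Pareto-dominance hypothesis at $c^E=0$, $p=0$ to sign the difference of the $z$-terms, and conclude that $p_{\textup{ex}}$ satisfies the defining inequality of $p_{\textup{ds}}$ in Equation~\eqref{eqn_ds}, hence $p_{\textup{ex}} \ge p_{\textup{ds}}$. Your closing monotonicity argument (that the LHS minus RHS of~\eqref{eqn_ds} is increasing in $p$, so any $p$ satisfying the inequality exceeds $p_{\textup{ds}}$) is in fact slightly more explicit than the paper's appeal to the minimality of $p_{\textup{ds}}$.
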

\begin{proof}
Recall that $p_{\textup{ex}}$ is the minimum spot check probability at which the $g^l$ equilibrium Pareto dominates the truthful equilibrium while the $g^l$ equilibrium exists at $p = p_{\textup{ex}}$.  
We first derive an expression for $p_{\textup{ex}}$.

We consider a spot checking peer prediction mechanism.  By our assumption, the $g^l$ equilibrium exists and Pareto dominates the truthful equilibrium when $c^E = 0$ and $p=0$.

Consider a fixed spot check probability $p \ge 0$.  Assume that the $g^l$ equilibrium exists at this spot check probability.
At the truthful equilibrium, an agent's expected utility is 
\begin{align}
p\, \E [y(s^h, s^t)]  + (1-p)\, \E [z(s^h, s^h)] - c^E. \label{eqn_ex_1}
\end{align}
At the $g^l$ equilibrium, an agent's expected utility is
\begin{align}
p\, \E [y(g^l(s^l), s^t)] + (1-p)\, \E [z(g^l(s^l), g^l(s^l))]. \label{eqn_ex_2}
\end{align}

When $p = p_{\textup{ex}}$, it must be that an agent's expected utility in the above two expressions~\eqref{eqn_ex_1} and~\eqref{eqn_ex_2} are the same.  Thus $p_{\textup{ex}}$ must satisfy
\begin{align}
&p_{\textup{ex}}\, \E [y(s^h, s^t)]  + (1- p_{\textup{ex}})\, \E [z(s^h, s^h)] - c^E \notag\\
&\qquad\qquad = p_{\textup{ex}}\, \E [y(g^l(s^l), s^t)] + (1-p_{\textup{ex}})\, \E [z(g^l(s^l), g^l(s^l))] \notag\\
&p_{\textup{ex}}\, \E [y(s^h, s^t)]  + (1- p_{\textup{ex}})\, \left( \E [z(s^h, s^h)] - \E [z(g^l(s^l), g^l(s^l))] \right) - c^E \notag\\
&\qquad\qquad = p_{\textup{ex}}\, \E [y(g^l(s^l), s^t)]. \label{eqn_pareto_1}
\end{align}

Next, we would like to show that $p_{\textup{ex}} \ge p_{\textup{ds}}$.

Since the $g^l$ equilibrium exists and Pareto dominates the truthful equilibrium for $c^E=0$ and $p=0$, it follows from the definition of Pareto dominance that
\begin{equation}
    \E [z(s^h, s^h)] \le \E [z(g^l(s^l), g^l(s^l))]. \label{eq:pareto-comp}
\end{equation}

Taking $p_{\textup{ex}}$ and substituting it into the LHS of Equation~\eqref{eqn_ds} (definition of $p_{\textup{ds}}$), in a setting with arbitrary positive $c^E \ge 0$, we have
\begin{align}
&p_{\textup{ex}}\, \E [y(s^h, s^t) ] - c^E \notag\\
&\ge p_{\textup{ex}}\, \E [y(s^h, s^t)]  + (1- p_{\textup{ex}})\, \left( \E [z(s^h, s^h)] - \E [z(g^l(s^l), g^l(s^l))] \right) - c^E \label{eqn_pareto_2}\\
&=p_{\textup{ex}}\, \E [y(g^l(s^l), s^t)]  \label{eqn_pareto_3}
\end{align}
Equation~\eqref{eqn_pareto_2} follows from Equation~\eqref{eq:pareto-comp}.
Equation~\eqref{eqn_pareto_3} follows from Equation~\eqref{eqn_pareto_1}. 

Thus, if we substitute $p_{\textup{ex}}$ into Equation~\eqref{eqn_ds}, then the resulting LHS is weakly greater than the RHS.  By definition of $p_{\textup{ds}}$, it is the minimum spot check probability for which the LHS of~\eqref{eqn_ds} is greater than its RHS.  Thus, it must be that $p_{\textup{ex}} \ge p_{\textup{ds}}$.
\end{proof}

\section{Proof of Theorem~\ref{theorem_suff_condition_pareto}}
\label{sec:proof_theorem_suff_condition_pareto}

\restatetheorem{theorem_suff_condition_pareto}
\begin{theorem}[\theoremSuffConditionParetoI]
	\theoremSuffConditionParetoII
\end{theorem}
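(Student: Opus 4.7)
The plan is to combine Lemmas~\ref{lemma_el_ds} and~\ref{lemma_ex_ds} directly. The theorem's hypothesis (that the $g^l$ equilibrium exists and Pareto dominates the truthful equilibrium when $c^E = 0$ and $p = 0$) implies the hypothesis of Lemma~\ref{lemma_el_ds} (existence of the $g^l$ equilibrium at $c^E = 0$, $p = 0$) and coincides with the hypothesis of Lemma~\ref{lemma_ex_ds}. So both lemmas apply, giving $p_{\textup{el}} \ge p_{\textup{ds}}$ and $p_{\textup{ex}} \ge p_{\textup{ds}}$ for all $c^E \ge 0$.

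The remaining step is to verify that $p_{\textup{Pareto}} \ge \min(p_{\textup{el}}, p_{\textup{ex}})$. This is exactly the argument sketched in the paragraph preceding Lemma~\ref{lemma_el_ds}: in order for the truthful equilibrium to be Pareto dominant, it must either Pareto dominate the $g^l$ equilibrium (requiring $p \ge p_{\textup{ex}}$) or the $g^l$ equilibrium must be eliminated altogether (requiring $p \ge p_{\textup{el}}$). Hence any spot check probability at which the truthful equilibrium is Pareto dominant must be at least $\min(p_{\textup{el}}, p_{\textup{ex}})$.

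Combining these observations yields
\begin{align*}
p_{\textup{Pareto}} \;\ge\; \min(p_{\textup{el}}, p_{\textup{ex}}) \;\ge\; p_{\textup{ds}},
\end{align*}
which is the desired conclusion.

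There is essentially no main obstacle here; the theorem is a packaging of Lemmas~\ref{lemma_el_ds} and~\ref{lemma_ex_ds} together with the simple observation that Pareto dominance of the truthful equilibrium over \emph{every} other equilibrium is at least as demanding as Pareto dominance over (or elimination of) the specific $g^l$ equilibrium. The only minor care needed is to confirm that the hypothesis transfers cleanly from $c^E = 0$ in the theorem statement to the quantifier ``for all $c^E \ge 0$'' in the conclusion, which is handled entirely inside the two invoked lemmas.
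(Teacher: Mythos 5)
Your proposal is correct and follows exactly the paper's own argument: invoke Lemmas~\ref{lemma_el_ds} and~\ref{lemma_ex_ds} (noting the theorem's hypothesis implies both of theirs), and combine with the observation that $p_{\textup{Pareto}} \ge \min(p_{\textup{el}}, p_{\textup{ex}})$ since Pareto dominance of the truthful equilibrium requires either eliminating or dominating the $g^l$ equilibrium. No gaps.
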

\begin{proof}
Consider any spot checking peer prediction mechanism.  

For the truthful equilibrium to be Pareto dominant, it is necessary that either the $g^l$ equilibrium is eliminated or the truthful equilibrium Pareto dominates the $g^l$ equilibrium while the $g^l$ equilibrium exists.
$p_{\textup{el}}$ is the minimum spot check probability at which the $g^l$ equilibrium is eliminated. 
$p_{\textup{ex}}$ is the minimum spot check probability at which the truthful equilibrium Pareto dominates the $g^l$ equilibrium while the $g^l$ equilibrium exists at $p = p_{\textup{ex}}$.  Thus, the minimum of $p_{\textup{el}}$ and $p_{\textup{ex}}$ is a lower bound of $p_{\textup{pareto}}$.  Formally
\begin{align}
p_{\textup{pareto}} \ge \min(p_{\textup{el}}, p_{\textup{ex}}).  \label{eqn_suff_1}
\end{align}  
By assumption, the $g^l$ equilibrium exists when $p = 0$.  By Lemma~\ref{lemma_el_ds}, we have
\begin{align}
p_{\textup{el}} \ge p_{\textup{ds}}.  \label{eqn_suff_2}
\end{align}
By assumption, the $g^l$ equilibrium exists and Pareto dominates the truthful equilibrium when $p = 0$.  By Lemma~\ref{lemma_ex_ds}, we have
\begin{align}
p_{\textup{ex}} \ge p_{\textup{ds}}. \label{eqn_suff_3}
\end{align}

By Equations~\eqref{eqn_suff_1},~\eqref{eqn_suff_2} and~\eqref{eqn_suff_3}, we have
\begin{align}
p_{\textup{pareto}} &\ge \min(p_{\textup{el}}, p_{\textup{ex}}) \notag\\
&\ge \min(p_{\textup{ds}}, p_{\textup{ex}}) \notag\\
&\ge \min(p_{\textup{ds}}, p_{\textup{ds}}) \notag\\
&= p_{\textup{ds}}. \notag
\end{align}
\end{proof}

\section{Proof of Lemma~\ref{lemma_best_no_effort_strategy}}
\label{proof_lemma_best_no_effort_strategy}

\restatelemma{lemma_best_no_effort_strategy}
\begin{lemma}
	\lemmaBestNoEffortStr
\end{lemma}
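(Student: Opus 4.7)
The plan is to prove the stronger claim that \emph{every} no-effort strategy yields expected spot-check reward equal to zero, so that the identity strategy (report $s^l$ truthfully) trivially attains the maximum. The key insight is that the two indicator terms in Equation~\eqref{eq:spot-check-fn} are designed to cancel precisely when the agent's report is statistically independent of the trusted signal, and such independence holds for every no-effort strategy because, by assumption, the low-quality signal carries no information about the object's true quality (and hence none about $s^t$).

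First I would fix a pure no-effort strategy $g : Q \to Q$ and, using the assumption that $s^l$ is uniform on $Q$, define $\alpha_q = |g^{-1}(q)|/|Q| = \Pr[g(s^l) = q]$. For the first term I would write
\[
\E\!\left[\Ind_{g(s^l_j) = s^t_j}\right] = \sum_{q \in Q} \Pr[g(s^l_j) = q]\,\Pr[s^t_j = q] = \sum_{q \in Q} \alpha_q\, \bar{\mu}_q,
\]
where the factorization uses independence of $s^l_j$ and $s^t_j$, and $\bar{\mu}_q$ denotes the marginal probability that a uniformly chosen object has trusted signal $q$. For the second term I would do the analogous calculation: the construction of the spot-check rule guarantees $j' \in J_i$ and $j'' \in J^t \setdiff J_i$, so $j' \ne j''$, which means $s^l_{j'}$ and $s^t_{j''}$ live on distinct objects and are independent. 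The same algebra produces $\sum_{q \in Q} \alpha_q\, \bar{\mu}_q$, so the two terms cancel and $\E[y(g(s^l), s^t)] = 0$. Mixed no-effort strategies inherit this value by linearity of expectation.

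The main obstacle is a modeling one rather than a technical one: the cancellation hinges on the marginal distribution of $s^t$ being identical when we draw an object uniformly from $J_i$ (the agent's evaluated objects) and when we draw uniformly from $J^t \setdiff J_i$ (spot-checked objects evaluated by others). I would address this by appealing to the ex-ante symmetry built into the model, namely that all objects' qualities are drawn from a common distribution, so the marginal over $s^t$ does not depend on which subset of objects we sample from. With this in hand, the identity strategy achieves the common maximum of $0$ attained by every no-effort strategy, which establishes the lemma.
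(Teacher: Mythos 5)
Your argument hinges on the claim that the low-quality signal is independent of the trusted signal $s^t$, so that both indicator terms in Equation~\eqref{eq:spot-check-fn} factorize identically and cancel, making \emph{every} no-effort strategy worth exactly zero. That independence is not available in the setting where this lemma lives. Although the model section describes an extreme case in which $s^l$ is uncorrelated with quality, the spot-checking section explicitly relaxes this: ``the low-quality signal might be arbitrarily correlated with the underlying quality,'' and the only assumption imposed immediately before the lemma is that $s^l$ is \emph{marginally uniform} on $Q$ --- a statement about its marginal, not about its joint law with $s^t$. Indeed, if your independence claim held, the standing assumption $\E[y(s^h,s^t)] - c^E > \E[y(s^l,s^t)]$ would collapse to $\E[y(s^h,s^t)] > c^E$, and the lemma would be vacuous: all no-effort strategies would tie at zero, so nothing would single out truthful reporting of $s^l$. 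The substance of the lemma is precisely the correlated case, which your cancellation argument does not address.

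By contrast, the paper's proof keeps the general expression for an arbitrary no-effort report $r$, namely $\sum_{s\in Q}\pr(s^t=s,\,r=s) - \sum_{s\in Q}\pr(s^t=s)\,\pr(r=s)$, observes that it equals $0$ when $r$ is constant (only then are $r$ and $s^t$ genuinely independent), and argues that it is $\ge 0$ when $r=s^l$, so that truthful reporting of $s^l$ is weakly best. To repair your proof you would have to show, without assuming independence, that the identity map maximizes $\pr\bigl(g(s^l)=s^t\bigr) - \sum_{s\in Q}\pr\bigl(g(s^l)=s\bigr)\pr(s^t=s)$ over all maps $g$; this genuinely requires some positive-association condition on the joint law of $(s^l,s^t)$ beyond uniformity of $s^l$ (if $s^l$ were anti-correlated with $s^t$, a non-identity relabeling would strictly beat truthful reporting), so the step cannot be obtained by the independence shortcut.
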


\begin{proof}
Consider the spot check reward mechanism in Equation~\eqref{eq:spot-check-fn}.  

If an agent invests no effort, his expected spot check reward is:
\begin{align}
&\sum_{s \in Q} \pr(r = s) \left( \pr(s^t = s | r = s) - \sum_{s' \in Q} \pr(s^t = s') \pr(r = s') \right) \notag\\
&= \sum_{s \in Q} \pr(s^t = s, r = s) - \sum_{s' \in Q} \pr(s^t = s') \pr(r = s') \notag
\end{align}

If the agent always makes a fixed report $r$, then the TA's signal $s^t$ and the agent's report $r$ are independent random variables, i.e.
\[\pr(s^t = s, r = s) = \pr(s^t = s) \pr(r = s),\]
for any $s \in Q$.  Thus the agent's expected reward must be zero.
\begin{align}
&\sum_{s \in Q} \pr(s^t = s, r = s) - \sum_{s' \in Q} \pr(s^t = s') \pr(r = s') \notag\\
&= \sum_{s \in Q} \pr(s^t = s) \pr(r = s) - \sum_{s' \in Q} \pr(s^t = s') \pr(r = s') \notag\\ 
&= 0 \notag
\end{align}

If the agent truthfully reports the low-quality signal $s^l$, then the agent's expected reward is:
\begin{align}
&\sum_{s \in Q} \pr(r = s) \left( \pr(s^t = s | r = s) - \sum_{s' \in Q} \pr(s^t = s') \pr(r = s') \right) \notag\\
&= \sum_{s \in Q} \pr(r = s) \left( \pr(s^t = s | r = s) - \pr(s^t = s') \right) \notag\\
&\ge 0 \notag
\end{align}

Thus the agent's expected spot check reward is maximized when he reports the low-quality signal $s^l$.
\end{proof}

\section{Proof of Corollary~\ref{corollary_1}}
\label{proof_corollary_1}

\restatecorollary{corollary_1}
\begin{corollary}
	\corollaryOne
\end{corollary}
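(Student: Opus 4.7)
The plan is to invoke Theorem~\ref{theorem_suff_condition_pareto}: for each of the listed mechanisms it suffices to verify that, when $c^E = 0$ and $p = 0$, (i) the $g^l$ equilibrium exists, and (ii) it Pareto dominates the truthful equilibrium. Since Lemma~\ref{lemma_best_no_effort_strategy} identifies $g^l$ with truthfully reporting the low-quality signal $s^l$, and since $s^l$ is assumed to be uniform over $Q$ and perfectly correlated across agents, under the $g^l$ profile every pair of agents reports identical signals on every object, and the empirical marginal frequency of each signal report converges to $1/|Q|$. I would therefore compare each mechanism's expected unchecked payoff on the two profiles: $g^l$ (always agreement, uniform marginals) versus truthful (agreement with probability $\pr[s^h_i = s^h_{i'}] < 1$, marginals induced by the prior over $s^h$).

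For the pure output-agreement style mechanisms (\citet{faltings2012eliciting,witkowski2013dwelling,waggoner2014output}), both conditions are immediate: the per-object reward is monotone in the agreement indicator and therefore maximized on the always-agreement profile $g^l$, so $g^l$ is a best reply to $g^l$, and the payoff weakly exceeds that of the truthful profile.

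The main work is the multi-object mechanisms. For \citet{shnayder2016informed}, the reward is $\Ind_{r_{ij}=r_{i'j}} - \sum_{s\in Q} F_i(s)F_{i'}(s)$; under $g^l$ this evaluates to $1 - \sum_s (1/|Q|)^2 = 1 - 1/|Q|$, while under the truthful profile it becomes $\pr[s^h_i = s^h_{i'}] - \sum_s \pr[s^h_i = s]\pr[s^h_{i'} = s]$, which is bounded above by $1 - 1/|Q|$ whenever the marginal of $s^h$ is not a point mass. Existence of the $g^l$ equilibrium would follow by checking that any deviation $g^{\textup{br}}$ from $g^l$ can only weakly decrease the expected agreement indicator against a partner playing $g^l$ (since $g^l$ is perfectly correlated), while not improving the subtracted frequency-product term. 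For \citet{kamble2015truth}, the same perfect-correlation argument shows agreement is maximal and the multiplicative scaling $K/\hat{f}(s)$ is well-defined (as $\hat f(s) \in (0,1)$ for finitely many objects with uniform $s^l$), so a parallel calculation applies. For \citet{radanovic2015incentives}, one verifies that with uniform $s^l$ the sample $\Sigma_i$ is double-mixed with high probability, and then the quadratic-style score $\tfrac{1}{2} + \Ind_{\cdot} - \tfrac{1}{2}\sum_s \Ind_{\cdot}\Ind_{\cdot}$ is again maximized by the fully-correlated $g^l$ profile.

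The expected main obstacle is precisely the multi-object mechanisms: their additive/multiplicative scaling terms were engineered to punish uninformative coordination, so the crux is to show that these corrections are neutralized in a multi-signal setting where $s^l$ is uniform—the coordination on $s^l$ looks statistically as ``unsurprising'' to the mechanism as truthful reporting of $s^h$, and hence any advantage of perfect agreement over merely-correlated agreement is not eroded by the penalty. With (i) and (ii) checked mechanism-by-mechanism, Theorem~\ref{theorem_suff_condition_pareto} yields $p_{\textup{Pareto}} \geq p_{\textup{ds}}$ for all $c^E \geq 0$ in every listed case.
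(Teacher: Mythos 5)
Your proposal follows essentially the same route as the paper's proof: reduce to the two hypotheses of Theorem~\ref{theorem_suff_condition_pareto} via Lemma~\ref{lemma_best_no_effort_strategy}, then verify existence and (weak) Pareto dominance of the $s^l$ equilibrium mechanism-by-mechanism at $c^E=0$, $p=0$, with the same key computations (e.g., $1-1/|Q|$ versus $\sum_s \pr(s)\pr(s|s) - \sum_s \pr(s)^2$ for \citet{shnayder2016informed}, and the perfect-correlation/uniform-marginal argument neutralizing the scaling terms). The only substantive divergence is in \citet{radanovic2015incentives}, where you argue the sample is double-mixed with high probability under uniform $s^l$ while the paper's existence argument claims the opposite; your reading is arguably the more consistent one with the uniformity assumption, and either way the conclusion (the $s^l$ profile attains the maximal score of $1$, hence existence and dominance) goes through.
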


\begin{proof}
By Lemma~\ref{lemma_best_no_effort_strategy}, for any spot checking peer prediction mechanism, the $g^l$ strategy is to always report the low-quality signal $s^l$.  

To verify that the conditions of Theorem~\ref{theorem_suff_condition_pareto} are satisfied, it suffices to verify that when $p = 0$, the $s^l$ equilibrium of the peer prediction mechanism exists and Pareto dominates the truthful equilibrium.  We verify these two conditions for all of the listed peer prediction mechanisms below.

We first consider output agreement peer prediction mechanisms. 
\paragraph{The Standard Output Agreement Mechanism~\citep{witkowski2013dwelling,waggoner2014output}} 

When $c^E = 0$ and $p = 0$, the $s^l$ equilibrium exists. (If all other agents except $i$ report $s^l$, then agent $i$'s best response is to also report $s^l$ in order to perfectly agree with other reports.)

When $c^E = 0$ and $p = 0$, at the $s^l$ equilibrium, every agent's expected utility is $1$ because their reports always perfectly agree.

When $c^E = 0$ and $p = 0$, at the truthful equilibrium, an agent's expected utility is 
$$\sum_{s^h \in Q} \pr(s^h) \pr(s^h|s^h) < \sum_{s^h \in Q} \pr(s^h) = 1,$$
where the inequality is due to the fact that the high-quality signals are noisy.  That is, for every realization $s^h$ of the high-quality signal, $\pr(s^h|s^h) \le 1$ and there exists one realization $s^h$ of the high-quality signal such that $\pr(s^h|s^h) < 1$.  
Thus, the $s^l$ equilibrium Pareto dominates the truthful equilibrium when $c^E = 0$ and $p = 0$.
The conditions of Theorem~\ref{theorem_suff_condition_pareto} are therefore satisfied, and hence $p_{\textup{Pareto}} \ge p_{\textup{ds}}$ for all settings with positive effort cost $c^E \ge 0$.

\paragraph{Peer Truth Serum~\citep{faltings2012eliciting}} 

When $c^E = 0$ and $p = 0$, the $s^l$ equilibrium exists.  (If all other agents except $i$ report $s^l$, then agent $i$'s best response is to also report $s^l$.)

When $c^E = 0$ and $p = 0$, at the $s^l$ equilibrium, everyone reports $s^l$ and the empirical frequency of $s^l$ reports is 1 ($F(s^l) = 1$).  Thus, every agent's expected utility is 
$$\alpha + \beta \frac{1}{F(s^l)} = \alpha + \beta.$$
When $c^E = 0$ and $p = 0$, at the truthful equilibrium, if agent receives the high-quality signal $s^h$ for an object, then he expects the empirical frequency of this signal to be $\pr(s^h | s^h)$.  Thus, at this equilibrium, an agent's expected utility is 
$$\displaystyle \alpha + \beta \sum_{s^h \in Q} \pr(s^h) \pr(s^h | s^h) \frac{1}{\pr(s^h | s^h)} = \alpha + \beta.$$
Thus, the $s^l$ equilibrium (weakly) Pareto dominates the truthful equilibrium when $c^E = 0$ and $p = 0$.
The conditions of Theorem~\ref{theorem_suff_condition_pareto} are therefore satisfied, and hence $p_{\textup{Pareto}} \ge p_{\textup{ds}}$ for all settings with positive effort cost $c^E \ge 0$.

Next, we consider multi-object peer prediction mechanisms.

\paragraph{\citet{dasgupta2013crowdsourced,shnayder2016informed}} 

When $c^E = 0$ and $p = 0$, the $s^l$ equilibrium exists.
(If all other agents always report the low-quality signal $s^l$ for every object, then agent $i$'s best response is also to report $s^l$ in order to maximize the probability of his report agreeing with other agents' reports for the same object.)
 
When $p = 0$, at the $s^l$ equilibrium, an agent's expected utility is 
\begin{align}
&\sum_{s^l \in Q} \pr(s^l) \pr(s^l | s^l) - \sum_{s^l \in Q} \pr(s^l) \pr(s^l) 
= \sum_{s^l \in Q} \pr(s^l) - \sum_{s^l \in Q} \pr(s^l) \pr(s^l) \notag\\
&= 1 - \sum_{s^l \in Q} \frac{1}{|Q|^2} 
= 1 - \frac{1}{|Q|}, \notag
\end{align}
where the first equality was due to the fact that the low-quality signal $s^l$ is noiseless ($\pr(s^l | s^l) = 1$) and the second equality was due to the fact that $s^l$ is drawn from a uniform distribution ($\pr(s^l) = \frac{1}{|Q|}$).

When $c^E = 0$ and $p = 0$, at the truthful equilibrium, an agent's expected utility is 
\begin{align}
&\sum_{s^h \in Q} \pr(s^h) \pr(s^h | s^h) - \sum_{s^h \in Q} \pr(s^h) \pr(s^h)
< \sum_{s^h \in Q} \pr(s^h) - \sum_{s^h \in Q} \pr(s^h)^2 \notag\\
&= 1 - \sum_{s^h \in Q} \pr(s^h)^2
\le 1 - \frac{1}{|Q|}, \notag
\end{align}
where the first inequality was due to the fact that the high-quality signal is noisy.  That is, for every realization $s^h$ of the high-quality signal, $\pr(s^h|s^h) \le 1$ and there exists one realization $s^h$ of the high-quality signal such that $\pr(s^h|s^h) < 1$.  Thus, the $s^l$ equilibrium Pareto dominates the truthful equilibrium when $c^E = 0$ and $p = 0$.
The conditions of Theorem~\ref{theorem_suff_condition_pareto} are therefore satisfied, and hence $p_{\textup{Pareto}} \ge p_{\textup{ds}}$ for all settings with positive effort cost $c^E \ge 0$.

\paragraph{\citet{kamble2015truth}} 

When $c^E = 0$ and $p = 0$, the $s^l$ equilibrium exists.  (If all other agents always report $s^l$, an agent's best response is also to report $s^l$ because doing so maximizes the probability of his report agreeing with other agents' reports for the same object.)

When $c^E = 0$ and $p = 0$, at the $s^l$ equilibrium, an agent's expected utility is 
\begin{align}
&\displaystyle \sum_{s^l \in Q} \pr(s^l) \pr(s^l | s^l) \lim_{N \rightarrow \infty} r(s^l) 
= \sum_{s^l \in Q} \pr(s^l) \frac{K}{ \sqrt{ \pr(s^l, s^l) } } 
= K \sum_{s^l \in Q}  \frac{ \pr(s^l) }{ \sqrt{ \pr(s^l) } } \notag\\
&= K \sum_{s^l \in Q}  \sqrt{ \pr(s^l) } 
= K \sum_{s_l \in Q} \sqrt{ \frac{1}{|Q|} }, \notag
\end{align}
where the first two equalities were due to the fact that the low-quality signal $s^l$ is noiseless ($\pr(s^l | s^l) = \pr(s^l)$), and the final equality was due to the fact that the low-quality signal $s^l $ is drawn from a uniform distribution.

When $c^E = 0$ and $p = 0$, at the truthful equilibrium, an agent's expected utility is 
\begin{align}
&\displaystyle \sum_{s^h \in Q} \pr(s^h) \pr(s^h | s^h) \lim_{N \rightarrow \infty} r(s^h) 
= \sum_{s^h \in Q} \pr(s^h, s^h) \frac{K}{ \sqrt{ \pr(s^h, s^h) } }  \notag\\
&= K \sum_{s^h \in Q} \sqrt{ \pr(s^h, s^h) } 
< K \sum_{s^h \in Q} \sqrt{ \pr(s^h) }
\le K \sum_{s^h \in Q} \sqrt{\frac{1}{ |Q| }}, \notag
\end{align}
where the first inequality was due to the fact that the high-quality signal $s^h$ is noisy.  That is, for every realization $s^h$ of the high-quality signal, $\pr(s^h|s^h) \le 1$ and there exists one realization $s^h$ of the high-quality signal such that $\pr(s^h|s^h) < 1$.
Thus, the $s^l$ equilibrium Pareto dominates the truthful equilibrium when $c^E = 0$ and $p = 0$.
The conditions of Theorem~\ref{theorem_suff_condition_pareto} are therefore satisfied, and hence $p_{\textup{Pareto}} \ge p_{\textup{ds}}$ for all settings with positive effort cost $c^E \ge 0$.

\paragraph{\citet{radanovic2015incentives}} 

When $c^E = 0$ and $p = 0$, the $s^l$ equilibrium exists.  
(If all other agents always report $s^l$ for every object, then any sample taken will not be ``double mixed''.\footnote{A sample is double mixed if every possible value appears at least twice.  This mechanism behaves differently depending on whether or not it collects a double mixed sample of reports from the agents.}  Thus, an agent's expected utility is zero regardless of his strategy.  In particular also reporting $s^l$ for every object is a best response.)

When $c^E = 0$ and $p = 0$, at the $s^l$ equilibrium, it must be that $r_{i''j'} = r_{i'j}$ and $r_{i''j'} = r_{i'''j''} = r_{ij}$.  An agent's expected utility at the $s^l$ equilibrium is:
\begin{align}
&\frac{1}{2} + \Ind_{r_{i''j'} = r_{i'j}} - \frac{1}{2} \sum_{s \in Q} \Ind_{r_{i''j'} = s} \Ind_{r_{i'''j''} = s} 
= \frac{1}{2} + 1 - \frac{1}{2} * 1 = 1. \notag
\end{align}

Let $\pi(\Sigma)$ be the probability that the sample $\Sigma$ is double mixed.  
When $c^E = 0$ and $p = 0$, at the truthful equilibrium, an agent's expected utility is:
\begin{align}
&\pi(\Sigma) \left( \frac{1}{2} + \pr(r_{i''j'} | r_{ij}) - \frac{1}{2} \sum_{s \in Q} \pr(s | r_{ij})^2 \right) 
\le \frac{1}{2} + \pr(r_{i''j'} | r_{ij}) - \frac{1}{2} \sum_{s \in Q} \pr(s | r_{ij})^2 \notag\\
&\le \frac{1}{2} + 1 - \frac{1}{2} * 1 = 1, \notag
\end{align}
where the first inequality is due to the fact that $\pi(\Sigma) \le 1$ and the second inequality was due to the fact that the agent's expected utility is maximized when $\pr(r_{i''j'} | r_{ij}) = 1$.  
Thus, the $s^l$ equilibrium Pareto dominates the truthful equilibrium when $c^E = 0$ and $p = 0$.
The conditions of Theorem~\ref{theorem_suff_condition_pareto} are therefore satisfied, and hence $p_{\textup{Pareto}} \ge p_{\textup{ds}}$ for all settings with positive effort cost $c^E \ge 0$.

\end{proof}

\section{Proof of Corollary~\ref{corollary_belief_mechanisms}}
\label{proof_corollary_belief_mechanisms}

\restatecorollary{corollary_belief_mechanisms}
\begin{corollary}
	\corollaryBeliefMech
\end{corollary}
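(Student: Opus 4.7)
My plan is to follow the same template as the proof of Corollary~\ref{corollary_1}: verify the hypotheses of Theorem~\ref{theorem_suff_condition_pareto} at $c^E = 0$, $p = 0$ for each of the four listed belief-based mechanisms. By Lemma~\ref{lemma_best_no_effort_strategy}, against the spot-check mechanism of Equation~\eqref{eq:spot-check-fn} the best no-effort signal report is $s^l$; since $s^l$ is noiseless and perfectly correlated across agents, the natural belief report paired with $s^l$ is the point mass $\delta_{s^l}$, and I take these together to constitute the $g^l$ strategy. For each mechanism it then suffices to show that (i) the $s^l$ equilibrium exists and (ii) it Pareto dominates the truthful equilibrium when $c^E = 0$, $p = 0$.

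The driving intuition is that a \emph{symmetric} proper scoring rule $R$ attains the same maximal value $R^\star$ on every degenerate pair $(\delta_s, s)$. In the $s^l$ equilibrium a deviator predicting other agents' reports faces a perfectly concentrated target distribution, so by strict properness any belief deviation strictly decreases the scoring-rule term, and any signal deviation breaks the agreement indicators; this gives existence, and the equilibrium payoff equals $R^\star$ plus the maximal values of the mechanism's agreement terms. At the truthful equilibrium the high-quality signal is genuinely noisy, so by strict properness $\E_{s^h}[R(\mathrm{post}(s^h), s^h_{-i})] < R^\star$, and the agreement indicators also fall strictly short of their maximal values. Thus the $s^l$ equilibrium Pareto dominates the truthful equilibrium and Theorem~\ref{theorem_suff_condition_pareto} applies.

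The main obstacle is verifying that the auxiliary terms distinguishing the four mechanisms do not overturn this comparison. For robust BTS~\citep{witkowski2012robust,witkowski2013learning}, I would note that at the $s^l$ equilibrium the shadowing radius $\min(b_j, 1 - b_j) = 0$, so the shadowed belief collapses to $\delta_{s^l}$ and contributes $R^\star$. For the multi-valued robust BTS~\citep{radanovic2013robust}, the $\frac{1}{b_j(r_i)}\Ind_{r_i = r_j}$ term contributes exactly $1$ at the $s^l$ equilibrium, and a case analysis shows it contributes at most $1$ in expectation at the truthful equilibrium, with strict inequality whenever $\pr(s^h \mid s^h) < 1$. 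For the divergence-based BTS~\citep{radanovic2014incentives}, the penalty $-\Ind_{r_i = r_j \| D(b_i, b_j) > \theta}$ never fires at the $s^l$ equilibrium since beliefs coincide, but can only further depress the truthful payoff. For the~\citet{riley2014minimum} mechanism, the proxy prediction $q_i(r_i)$ at the $s^l$ equilibrium equals the common degenerate belief $\delta_{s^l}$, so the capping minimum has no effect and the reward is $R^\star$, whereas at the truthful equilibrium both arguments of the minimum are strictly below $R^\star$ by properness and noisiness. The hardest subcase is the Riley mechanism, because the proxy is constructed from other agents' beliefs and existence of the $s^l$ equilibrium requires checking that no unilateral deviation can raise the capped minimum above $R^\star$; this I would establish by tracing through how the proxy is formed when only one agent deviates while the remaining agents continue to play $(s^l, \delta_{s^l})$.
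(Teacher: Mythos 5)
Your proposal is correct and follows essentially the same route as the paper: define the $s^l$ equilibrium with point-mass beliefs $b_{s^l}$, use symmetry of the proper scoring rule to show the $s^l$ equilibrium attains the maximal score $R^\star$ while the truthful equilibrium falls (weakly) short due to signal noise, and check the auxiliary terms mechanism by mechanism before invoking Theorem~\ref{theorem_suff_condition_pareto}. The only minor divergence is in the \citet{riley2014minimum} case: the paper observes that when all others report $s^l$ we have $\delta_i = 0$, so the capping minimum never applies and the reward is simply $R(b_i, r_{-i})$, which sidesteps the proxy-prediction analysis you anticipated as the hardest subcase.
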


\begin{proof}
By Lemma~\ref{lemma_best_no_effort_strategy}, for any spot checking peer prediction mechanism, the $g^l$ strategy is to always report the low-quality signal $s^l$.  

To verify that the conditions of Theorem~\ref{theorem_suff_condition_pareto} are satisfied, it suffices to verify that when $p = 0$, the $s^l$ equilibrium of the peer prediction mechanism exists and Pareto dominates the truthful equilibrium.  We verify these two conditions for all of the listed peer prediction mechanisms below.

Let $b_s$ denote a belief report which predicts that signal $s$ is observed with probability $1$, i.e. $\pr(s) = 1$ and $\pr(s') = 0, \forall s' \in Q, s' \ne s$.
Let the $s^l$ equilibrium denote the equilibrium where every agent's signal report is $s^l$ and belief report is $b_{s^l}$. 

For mathematical convenience, we assume that the scoring rule is \emph{symmetric} \citep{gneiting2007strictly}.  That it, the reward for reporting a signal that is predicted with probability~1 is the same regardless of the signal's identity:
\[R(b_{s}, s) = R(b_{s'}, s'), \forall s \ne s'.\]
This is a very mild condition that is satisfied by all standard scoring rules that compute rewards based purely on the predicted probabilities and the outcome, including the quadratic scoring rule and the log scoring rule.

For symmetric scoring rules, when $p=0$, an agent's expected score is maximized by predicting $b_{s}$ when $s$ is observed for any signal $s \in Q$.  

\paragraph{Binary Robust BTS~\citep{witkowski2012robust,witkowski2013learning}} 

When $c^E = 0$ and $p = 0$, the $s^l$ equilibrium exists. 
(If all other agents report $s^l$ and $b_{s^l}$, then the best belief report for agent $i$ is $b_{s^l}$.  Moreover the best signal report for agent $i$ is $s^l$ which leads to a shadowed belief report of $b_{s^l}$.)

When $c^E = 0$ and $p = 0$, at the $s^l$ equilibrium, an agent's expected utility is $R(b_{s^l}, s^l) + R(s_{s^l}, s^l)$.
This is the maximum possible expected utility that an agent can achieve because the proper scoring rule $R$ is symmetric.  Therefore, it must be greater than or equal to the agent's expected utility at the truthful equilibrium when $c^E = 0$ and $p = 0$.

\paragraph{Multi-valued Robust BTS~\citep{radanovic2013robust}}

When $c^E = 0$ and $p = 0$, the $s^l$ equilibrium exists. 
(If all other agents report $s^l$ and $b_{s^l}$, then the best belief report for agent $i$ is $b_{s^l}$.  Moreover, the best signal report for agent $i$ is $s^l$ which maximizes the probability of his signal report agreeing with other agents' signal reports.)

\smallskip
When $c^E = 0$ and $p = 0$, at the $s^l$ equilibrium, an agent's expected utility is 
\begin{align}
\sum_{s^l} \pr(s^l) \pr(s^l | s^l) + R(b_{s^l},s^l) \notag
= \sum_{s^l} \pr(s^l) + R(b_{s^l}, s^l) \notag
= 1 + R(b_{s^l},s^l), \notag
\end{align}
where the first equality was due to the fact that the low-quality signal $s^l$ is noiseless ($\pr(s^l | s^l) = 1$).

When $c^E = 0$ and $p = 0$, at the truthful equilibrium, an agent's expected utility is 
\begin{align}
&\sum_{s^h \in Q} \pr(s^h) \pr(s^h|s^h) \frac{1}{\pr(s^h|s^h)} + \E[R(\pr(r_j|s^h), r_j)] \notag\\
&= \sum_{s^h \in Q} \pr(s^h) + \E[R(\pr(r_j|s^h), r_j)] \notag
= 1 + \E[R(\pr(r_j|s^h), r_j)] \notag
\le 1 + R(b_{s^l},s^l), \notag
\end{align}
where the inequality was due to the fact that the proper scoring rule $R$ is symmetric.
Thus, the $s^l$ equilibrium Pareto dominates the truthful equilibrium when $c^E = 0$ and $p = 0$.
The conditions of Theorem~\ref{theorem_suff_condition_pareto} are therefore satisfied, and hence $p_{\textup{Pareto}} \ge p_{\textup{ds}}$ for all settings with positive effort cost $c^E \ge 0$.

\paragraph{Divergence-Based BTS~\citep{radanovic2014incentives}} 

When $c^E = 0$ and $p = 0$, the $s^l$ equilibrium exists.
(If all other agents report $s^l$ and $b_{s^l}$, then the best belief report for agent $i$ is $b_{s^l}$.  Moreover, the best signal report for agent $i$ is $s^l$, which means that the penalty is $0$ because the agent's signal reports agree and their belief reports also agree.)

\smallskip
When $c^E = 0$ and $p = 0$, at the $s^l$ equilibrium, an agent's expected utility is 
\begin{align}
- \mathds{1}_{s^l = s^l || D(b_{s^l}, b_{s^l}) > \theta} + R(b_{s^l},s^l) \notag
= R(b_{s^l},s^l).
\end{align}

At the truthful equilibrium, an agent's expected utility is 
\begin{align}
&- \mathds{1}_{s_{i'j}^h = s_{i'j}^h || D(\pr(r|s_{ij}^h), \pr(r|s_{i'j}^h)) > \theta} + R(\pr(r|s^h), s^h)
< R(\pr(r|s^h), s^h) \notag
< R(b_{s^l}, s^l), \notag
\end{align}
where the first inequality was due to the fact that the high-quality signal $s^l$ is noisy.  That is, for every realization $s^h$ of the high-quality signal, $\pr(s^h|s^h) \le 1$ and there exists one realization $s^h$ of the high-quality signal such that $\pr(s^h|s^h) < 1$.  The second inequality was due to the fact that the proper scoring rule $R$ is symmetric.
Thus, the $s^l$ equilibrium Pareto dominates the truthful equilibrium when $c^E = 0$ and $p = 0$.
The conditions of Theorem~\ref{theorem_suff_condition_pareto} are therefore satisfied, and hence $p_{\textup{Pareto}} \ge p_{\textup{ds}}$ for all settings with positive effort cost $c^E \ge 0$.

\paragraph{\citet{riley2014minimum}} 

When $c^E = 0$ and $p = 0$, the $s^l$ equilibrium exists.  
(When all other agents always report $s^l$, for agent $i$, $\delta_i = 0$ because for any signal other than $s^l$, the number of other agents who reported the signal is $0$.  Thus, agent $i$'s reward is $R(b_i, s^l)$.  Since agent $i$'s signal report does not affect his reward, reporting $s^l$ is as good as reporting any other value.  Moreover, since all other agents report $s^l$, the best belief report for agent $i$ is to report $b_{s^l}$.)  

When $c^E = 0$ and $p = 0$, at the $s^l$ equilibrium, $\delta_i = 0$ because for any signal other than $s^l$, the number of other agents who reported the signal is $0$.  Thus, an agent's expected utility is $R(b_{s^l}, s^l)$.
By the definition of the mechanism, an agent's reward is at most $R(b_i, r_{-i})$, which is less than or equal to $R(b_{s^l}, s^l)$ because $R$ is a symmetric proper scoring rule.  Therefore, an agent achieves the maximum expected utility at the $s^l$ equilibrium, which is greater than or equal to the agent's expected utility at the truthful equilibrium when $c^E = 0$ and $p = 0$.
\end{proof}

\end{document}